\begin{document}

\maketitle     

\begin{abstract}
    We study the \gls{rdf} for the lossy compression of \gls{dt} \gls{wsacs} Gaussian processes with memory, arising from sampling \gls{ct} \gls{wscs} Gaussian source processes. The importance  of this problem arises as such \gls{ct} processes represent communications signals, and sampling must be applied to facilitate the \gls{dt} processing associated with their compression. Moreover, the physical characteristics of oscillators imply that the sampling interval is incommensurate with the period of the \gls{af} of the physical process, giving rise to  the \gls{dt} \gls{wsacs} model considered. In addition, to reduce the loss, the sampling interval is generally shorter than the correlation length,  and thus, the \gls{dt} process is  correlated as well. 
    The difficulty in the \gls{rdf} characterization follows from the information-instability of \gls{wsacs} processes, which renders the traditional information-theoretic tools inapplicable. In this work we utilize  the information-spectrum framework to characterize the \gls{rdf} when a finite and bounded delay is allowed between processing of subsequent source sequences. This scenario extends our previous works which studied settings without  processing delays or without memory. Numerical evaluations reveal the impact of scenario parameters on the  \gls{rdf} with asynchronous sampling.
\end{abstract}

\glsresetall   

\vspace{-0.2cm}

\section{Introduction}
The repetitive operations applied in the generation schemes for communications signals induce  \gls{ct} \gls{wscs} statistics upon these signals \cite[Sec.~1.1]{gardner1994}, \cite[Sec.~1]{gardner2006}. For facilitating digital processing, the observed \gls{ct} signal is first sampled, resulting in a \gls{dt} signal whose statistics depend on the ratio between the sampling interval and the period of the \gls{ct} \gls{af}: When this ratio is a \emph{rational} number, which is referred to as \emph{synchronous sampling}, the sampled process is a \gls{dt} \emph{\gls{wscs}} process; when the ratio is an \emph{irrational} number, which is referred to as \emph{asynchronous sampling}, the sampled process is a \gls{dt} \emph{\gls{wsacs}} process \cite[Sec.~3]{izzo1996}, \cite[Sec.~3.9]{gardner2006}. Consider, for example, the compress-and-forward relay channel \cite{cover1979, kramer2005}: In this channel, the relay compresses the sampled received signal before forwarding it to the destination \cite{dabora2008, wu2013}. Due to the presence of clock jitter (see \cite{vig1993, azeredo-leme2011}) and as the clocks at the source and at the relay are physically separated, the sampling interval and the period of the \gls{ct}  \gls{af} are typically incommensurate, giving rise to 
asynchronous sampling.

To minimize the loss due to sampling, the sampling interval is typically taken smaller than the maximal autocorrelation length of the \gls{ct} \gls{af}, and thereby adjacent samples are statistically correlated. In such a situation we say that the source has \emph{memory}. Moreover, as many communications signals are (asymptotically) Gaussian (see, e.g., \cite{ochiai2001, wei2010, metzger1987}), it follows that sampled communications signals can be modeled as \gls{dt} \gls{wsacs} Gaussian processes with memory, which highlights the importance of characterizing the \gls{rdf} for this class of processes.

In this work we study the \gls{rdf} for \gls{dt} \gls{wsacs} Gaussian processes with memory. The challenge arises from the nonstationarity and the nonergodicity of these processes, which result in information-instability, see \cite{dobrushin1963}, \cite[Sec.~I]{verdu1994}, which renders conventional information-theoretic arguments, relying on typicality, inapplicable.  Among the two relevant alternative frameworks, \gls{ams} processes \cite{gray1980, fontana1981, faigle2007, gray2011} and the {\em information spectrum} framework \cite{han1997, han2010}, in this work the rate-distortion analysis is carried out based on the latter.


The \gls{rdf} of \gls{dt} \gls{wscs} Gaussian processes was characterized in \cite{kipnis2018}, by transforming a scalar \gls{dt} \gls{wscs} process into an equivalent vector stationary process. This result was used in \cite{abakasanga2020} to characterize the \gls{rdf} for \gls{dt} \emph{memoryless} \gls{wsacs} Gaussian processes, derived within the information-spectrum framework. Recently, using a non-random coding approach,  \cite{nishiara2024} proved the achievability of the \gls{rdf} for a general \gls{dt} process under \emph{fixed-length coding} and  \emph{maximum distortion} proposed in \cite{nomura2015}. The dual model, of capacity of channels with additive \gls{wsacs} Gaussian noise was also considered, where \cite{shlezinger2020} assumed memoryless noise, and  \cite{dabora2023} considered noise with memory. In both works, the analysis was carried out within the information spectrum framework.
In the context of the current problem and model, we derived in \cite{tan2024} the \gls{rdf} for an encoding scenario in which the encoder must compress its incoming sequences without delay between subsequent sequences. This assumption resulted in a characterization expressed as the average of the limits of \glspl{rdf}, where each limit is computed with a non-stationary distribution, which does not lead itself to numerical evaluation. In contrast, in this work we consider the scenario in which a finite and bounded  delay is allowed between the encoding of subsequently sampled sequences. This delay facilitates the statistical independence and the optimality of the initial sampling phases, resulting in a different representation for the \gls{rdf}, through the limit of a sequence of computable \glspl{rdf}.



\textit{\textbf{Main Contributions:}} In this work we characterize the \gls{rdf} for compressing \gls{dt} \gls{wsacs} Gaussian processes with memory, subject to  \gls{mse} distortion. Because of the information-instability of \gls{wsacs} processes, the derivation is carried out within the information-spectrum framework. It is assumed that a finite and bounded delay can be introduced between subsequently sampled sequences. This delay is used to facilitate the statistical independence between subsequent sequences, and synchronize sampling to the optimal initial sampling phase, which minimizes the overall compression rate. This setup builds a bridge between the analog signal domain and the digital processing domain for the compression of communications signals, which is a point-of-view absent from previous works on compression, except for our previous works \cite{abakasanga2020} and \cite{tan2024}. Here we also account for the memory of the sampled process, which requires the introduction of a new proof technique, not present in previous works.

\vspace{-0.1cm}
\textit{\textbf{The rest of this work is organized as follows:}} Sec.~\ref{sec: pre_mod_stat_of_prob} reviews \gls{wscs} processes and rate-distortion theory, formulates the problem and introduces relevant information-spectrum definitions; Sec.~\ref{sec: res} presents the \gls{rdf} result; Sec.~\ref{sec: num_eva_dis} numerically evaluates the \gls{rdf} and discusses the impact of different setup parameters on the \gls{rdf}; and Sec.~\ref{sec: con} concludes the work.

\vspace{-0.15cm}
\section{Preliminaries, Model and Problem Statement}
\label{sec: pre_mod_stat_of_prob}

\vspace{-0.15cm}
\subsection{Notations}
\vspace{-0.2cm}
We denote the sets of real numbers, positive real numbers, rational numbers, integers, non-negative integers and positive integers by $\mR$, $\mRdplus$, $\mQ$, $\mZ$, $\mN$ and $\mNplus$, respectively. \Glspl{rv} (resp., deterministic values) are denoted by uppercase letters, e.g., $X$ (resp., lowercase letters, e.g., $x$). Random processes and functions are denoted by stating the time variable in brackets, using round brackets for \gls{ct} and square brackets for \gls{dt}, e.g., $X(t)$, $t \in \mR$, is a \gls{ct} random process, and $x[i]$, $i \in \mZ$ is a \gls{dt} deterministic function. 
Matrices are denoted by sans serif uppercase letters, e.g., $\mmat{A}$, and $(\mmat{A})_{u, v}$, $u, v \in \mN$, denotes its element in the $u$-th row and the $v$-th column. The transpose of a matrix $\mmat{A}$ is denoted by $\mmat{A}^{T}$. For a square matrix $\mmat{B}$, $\det(\mmat{B})$ and $\tr\{\mmat{B}\}$ denote its determinant and its trace, respectively. $\mmat{B} \succ 0$ denotes it is positive definite. Boldface uppercase (resp., lowercase) letters denote column random (resp., deterministic) vectors, e.g., $\mvec{X}$ (resp., $\mvec{x}$). $\zerovec^{k}$ denotes a column all-zero vector of length $k$. 
$\mvec{X} \sim \mGd(\mmvec{\mvec{X}}, \macvmat{\mvec{X}})$ denotes a real Gaussian column random vector $\mvec{X}$ with a mean vector $\mmvec{\mvec{X}}$ and an autocovariance matrix $\macvmat{\mvec{X}}$. $\mE\{\cdot\}$, $\mVar\{\cdot\}$, $|\cdot|$, $\lfloor \cdot \rfloor$, $\lceil \cdot \rceil$, $\log(\cdot)$, $\Pr(\cdot)$, and $p_{X}(\cdot)$ denote the expectation, the variance, the magnitude, the floor function, the ceiling function, the base-$2$ logarithm, the probability and the \gls{pdf} of a continuous \gls{rv} $X$, respectively. We define $a^{+} \triangleq \max\{0, a\}$ and $j = \sqrt{-1}$. 
The differential entropy and the mutual information are denoted by $h(X)$ and $I(X; Y)$, respectively, where $X$ and $Y$ are real \glspl{rv}.

\vspace{-0.05cm}
\subsection{Wide-Sense Cyclostationary Processes}

We next review several definitions relating to \gls{wscs} processes, beginning with the formal definition of such processes:

\begin{definition}[\gls{wscs} processes {\cite[Def.~17.1]{giannakis1999}, \cite[Sec.~3.2]{gardner2006}}]
    A real \gls{ct} (resp., \gls{dt}) random process $X(t)$, $t \in \mR$ (resp., $X[i]$, $i \in \mZ$) is \emph{\gls{wscs}} if both its mean $m_{X}(t) \triangleq \mE\{X(t)\}$ (resp., $m_{X}[i] \triangleq \mE\{X[i]\}$) and its \gls{af} $c_{X}(t, \lambda) \triangleq \mE\{X(t) \cdot X(t + \lambda)\}$ (resp., $c_{X}[i, \Delta] \triangleq \mE\{X[i] \cdot X[i + \Delta]\}$) are periodic in time $t$ (resp., $i$) with some period $T_{c} \in \mRdplus$ (resp., $N_{c} \in \mN^{+}$) for any lag $\lambda \in \mR$ (resp., $\Delta \in \mZ$), i.e., $c_{X}(t, \lambda) = c_{X}(t + T_{c}, \lambda)$ (resp., $c_{X}[i, \Delta] = c_{X}[i + N_{c}, \Delta]$).
\end{definition}

Next, we define \gls{dt} almost periodic functions as follows:

\begin{definition}[\gls{dt} almost periodic functions {\cite{bohr2018}, \cite[Def.~11]{guan2013}}]
\label{def:almost_periodic_function}
    A real, \gls{dt} deterministic function $f[i]$, $i \in \mZ$, is said to be \emph{almost periodic}, if for any $\epsilon \in \mRdplus$, there exists an associated number $l_{\epsilon} \in \mNplus$ such that for any $\alpha \in \mZ$, there exists $\Delta \in [\alpha, \alpha + l_{\epsilon})$, such that $\sup_{i \in \mZ} |f(i + \Delta) - f(i)| < \epsilon$.
\end{definition}

With Def.~\ref{def:almost_periodic_function} we can define \gls{dt} \gls{wsacs} processes as follows:

\begin{definition}[\gls{dt} \gls{wsacs} processes {\cite[Def.~17.2]{giannakis1999}, \cite[Sec.~3.2.2]{gardner2006}}]
    A real \gls{dt} random process $X[i]$, $i \in \mZ$, is called  \emph{\gls{wsacs}} if both its mean $m_{X}[i]$ and its \gls{af} $c_{X}[i, \Delta]$ are \emph{almost periodic} in time $i$ for any lag $\Delta \in \mZ$.
\end{definition}

\vspace{-0.05cm}

\subsection{Rate-Distortion Theory}
Consider first the definition of a lossy source code, stated as follows:

\begin{definition}[Lossy source code {\cite[Sec.~10.2]{cover2006}, \cite[Sec.~3.6]{el_gamal2011}}]
    A \emph{lossy source code} $(m, l)$, where $m$ is the size of the \emph{message set} and $l$ is the \emph{blocklength}, consists of: An \emph{encoder} $f_{l}(\cdot)$, that maps a block of $l$ source symbols $\{x[i]\}_{i = 0}^{l - 1} \equiv \xvecl$, over corresponding alphabets $\{\mset{X}_{i}\}_{i = 0}^{l - 1} \equiv \Xsetl$, into an index selected from a message set of size $m$, i.e., $f_{l}(\cdot): \Xsetl \mapsto \{0, 1, \cdots, m - 1\}$; and a \emph{decoder} $g_{l}(\cdot)$, that assigns a block of $l$ reconstruction symbols $\{\mr{x}[i]\}_{i = 0}^{l - 1} \equiv \xrvecl$, over corresponding alphabets $\{\mrset{X}_{i}\}_{i = 0}^{l - 1} \equiv \rXsetl$, to each received index, i.e., $g_{l}(\cdot): \{0, 1, \cdots, m - 1\} \mapsto \rXsetl$, where  $\frac{1}{l} \log m \triangleq R$ is called the {\em code rate}.
\end{definition}

The mismatch between the source symbol $x$ and its reconstruction $\mr{x}$ is measured using a \emph{distortion function} $d(x, \mr{x})$; the distortion between a block of $l$ source symbols 
and its block reconstruction 
is defined as $\maa{d}(\xvecl, \xrvecl) \triangleq \frac{1}{l} \sum_{i = 0}^{l - 1} d\big(x[i], \mr{x}[i]\big)$. 
When considering compression of sources with continuous alphabets, a commonly used distortion metric is the \emph{squared-error} defined as $d_{se}(x, \mr{x}) \triangleq (x - \mr{x})^2$. An \emph{achievable} rate-distortion pair is defined as follows:

\begin{definition}[Achievable rate-distortion pair {\cite[Sec.~10.2]{cover2006}, \cite[Sec.~3.6]{el_gamal2011}}]
\label{def: achi_rate_dist_pair}
    For a given distortion constraint $D$, if there exists a sequence of $(2^{lR}, l)$ lossy source code  for which
        \vspace{-0.05cm}
    \begin{equation*}
        \limsup_{l \to \infty} \mE\Bigg\{\maa{d}\bigg(\Xvecl, g_{l}\Bigl(f_{l}\big(\Xvecl\big)\Big)\bigg)\Bigg\} \leq D,
        \vspace{-0.05cm}
    \end{equation*}
    then the rate-distortion pair $(R, D)$ is said to be \emph{achievable}.
\end{definition}

Finally, the \gls{rdf} is defined as follows:

\begin{definition}[\gls{rdf} {\cite[Sec.~IV-A]{berger1998}, 
\cite[Sec.~3.6]{el_gamal2011}}]
\label{def: rdf}
    Given a distortion constraint $D$, the \gls{rdf} $R(D)$ is the infimum of all code rates $R$ for which the rate-distortion pair $(R, D)$ is achievable.
\end{definition}

\subsection{Problem Formulation}
\label{sec: prob_form_mod}
Consider a \emph{zero-mean} \gls{ct} \gls{wscs} Gaussian source process $X_{c}(t)$, $t \in \mR$, with an \gls{af} $c_{X_{c}}(t, \lambda)$
where $\lambda \in \mR$ denotes the lag. $c_{X_{c}}(t, \lambda)$ is \emph{uniformly continuous} and \emph{bounded} in $t, \lambda \in \mR$, 
and has a period of $T_{c} \in \mRdplus$ in $t$: $c_{X_{c}}(t, \lambda) = c_{X_{c}}(t + T_{c}, \lambda)$, $|c_{X_{c}}(t, \lambda)| \leq \gamma\in\mR$, $\forall t, \lambda \in \mR$. The random process $X_{c}(t)$ is a \emph{finite-memory} process with a maximal autocorrelation length $\lambda_{c} \in \mRdplus$, i.e., $c_{X_{c}}(t, \lambda) = 0$, $\forall |\lambda| > \lambda_{c}$. $X_{c}(t)$ is uniformly sampled with the sampling interval $T_{s}(\epsilon) \triangleq \frac{T_{c}}{p + \epsilon}$, where $p \in \mNplus$ and $\epsilon \in [0, 1)$. The sampled process is $\Xepphi[i] \triangleq X_{c}\bigl(i \cdot T_{s}(\epsilon) + \phi_{s}\bigl)$, where $i \in \mN$ and $\phi_{s} \in [0, T_c)$ denotes the \emph{initial sampling phase}. The sampling interval satisfies $T_{s}(\epsilon) 
< \lambda_{c}$ which implies that $c_{\Xepphi}[i, \Delta] = 0$, $\forall |\Delta| \geq \Big\lceil \frac{(p + 1) \cdot \lambda_{c}}{T_{c}}\Big\rceil \triangleq \tau_{c} < \infty$. Thus,  $\Xepphi[i]$ is a \emph{finite-memory} process with a maximal autocorrelation length $\tau_{c}$.

The statistics of $\Xepphi[i]$ depend on the ratio between $T_{s}(\epsilon)$ and $T_{c}$: When $\epsilon \in \mQ$, i.e., $\exists u, v \in \mNplus$, s.t. $\epsilon = \frac{u}{v}$, then $\Xepphi[i]$ is a \gls{wscs} process with a period of statistics $N_{c} = p \cdot v + u \triangleq p_{u, v}$. This is referred to as \emph{synchronous sampling}; when $\epsilon \notin \mQ$, 
the sampled process is a \gls{wsacs} process. This is referred to as \emph{asynchronous sampling}.
In this work, a {\em finite and bounded delay between the processing} of subsequently sampled sequences is allowed. 
This delay facilitates the synchronization of the initial sampling phase of every sequence to the optimal phase within $[0, T_{c})$, in the sense of minimizing the overall compression rate. This setup differs from our previous work \cite{tan2024}, in which processing delay was not allowed. 

\subsection{Relevant Information-Spectrum Definitions}
In this work we use the limit superior in probability, which is defined next:

\begin{definition}[Limit superior in probability {\cite[Def.~1.3.1]{han2010}}]
\label{def: limsup_in_prob}
    For a sequence of real \glspl{rv} $\{X_{i}\}_{i = 0}^{\infty}$, its \emph{limit superior in probability} is defined as
    \begin{equation}
        \limsupp_{i \to \infty} {X_{i}} \triangleq \inf\Big\{\alpha \in \mR | \lim_{i \to \infty} \Pr\{X_{i} > \alpha\} = 0\Big\} \triangleq \alpha_{0}.
    \end{equation}
\end{definition}
    \vspace{-0.3cm}

The spectral sup-mutual information rate is now defined as follows:

\begin{definition}[Spectral sup-mutual information rate {\cite[Def.~3.5.2]{han2010}}]
    The \emph{spectral sup-mutual information rate} of two sequences of real continuous \glspl{rv}, $\{X[i]\}_{i = 0}^{l - 1} \equiv \Xvecl$ and $\{Y[i]\}_{i = 0}^{l - 1} \equiv \Yvecl$, is defined as
    \begin{equation*}
        \ssmir(\Xvecinf, \Yvecinf) \triangleq \limsupp_{l \to \infty} \frac{1}{l} \log \frac{p_{\Yvecl | \Xvecl} (\Yvecl | \Xvecl)}{p_{\Yvecl}(\Yvecl)}.
    \end{equation*}
\end{definition}
    \vspace{-0.2cm}

In our proof of the main result we use an \gls{rdf} characterization for arbitrary \gls{dt} processes subject to a uniform integrability condition on the loss function. Uniform integrability is defined as follows:


\begin{definition}[Uniform integrability {\cite[Eqn.~(25.10)]{Bill86probabilitymeasure}}]
    A sequence of real  \glspl{rv} $\{X_{i}\}_{i = 0}^{\infty}$, with a common probability measure $P$  is said to be \emph{uniformly integrable} if $\lim_{u \to \infty} \sup_{i \geq 0} \int_{|X_i|\geq u} |X_i| \mtxt{d} P = 0$. 
\end{definition}

Using the definitions above, it follows that with \emph{fixed-length coding} and with the \emph{average distortion criterion}, the \gls{rdf} for compressing an arbitrary \gls{dt} process is stated as follows:

\begin{theorem}[The \gls{rdf} for an arbitrary \gls{dt} process {\cite[Thm.~5.5.1]{han2010}}]
\label{thm: rdf_arb_dt_proc}
    Consider an arbitrary, real-valued \gls{dt} process $X[i]$, $i \in \mN$. Let $\{X[i]\}_{i = 0}^{l - 1} \equiv \Xvecl$ and $\{\mr{X}_{i}\}_{i = 0}^{l - 1}\equiv \rXvecl$ denote the blocks of $l$ source symbols 
    and of $l$ reconstruction symbols, respectively, and let  
    $F_{\Xvecl, \rXvecl}$ denote their joint \gls{cdf}. If 
    there exists a \emph{deterministic reference word} $\{r_{i}\}_{i = 0}^{l - 1} \equiv \rvecl$, such that the sequence $\{\maa{d}(\Xvecl, \rvecl)\}_{l = 1}^{\infty}$ is \emph{uniformly integrable}, then 
    the \gls{rdf} for compressing $X[i]$ is
    \begin{equation}
    \label{eqn: rdf_arb_dt_sou_proc}
        R(D) = \inf_{\substack{F_{\Xvecinf, \rXvecinf}: \\ \limsup_{l \to \infty} \mE\{\maa{d}(\Xvecl, \rXvecl)\} \leq D}} \ssmir\big(\Xvecinf, \rXvecinf\big).
    \end{equation}
        \vspace{-0.45cm}
\end{theorem}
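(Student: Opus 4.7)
The proof naturally splits into converse and achievability, paralleling the Verd\'u--Han information-spectrum approach for channel coding but adapted to lossy compression under the average-distortion criterion. For the converse, given an achievable pair $(R, D)$ realized by codes $\{(f_{l}, g_{l})\}$, I would define $\rXvecl \triangleq g_{l}(f_{l}(\Xvecl))$ and take the induced joint law as the candidate attaining the infimum in \eqref{eqn: rdf_arb_dt_sou_proc}. The distortion constraint in that infimum holds by Def.~\ref{def: achi_rate_dist_pair}. To bound the spectral sup-mutual information rate, I would exploit that $\rXvecl$ has at most $2^{lR}$ support points, so for any $\gamma > 0$,
\begin{equation*}
\Pr\left\{p_{\rXvecl}(\rXvecl) < 2^{-l(R+\gamma)}\right\} \leq 2^{lR} \cdot 2^{-l(R+\gamma)} = 2^{-l\gamma} \to 0.
\end{equation*}
Since $p_{\rXvecl \mid \Xvecl} \leq 1$ (the reconstruction is conditionally discrete), this translates to a vanishing tail bound on $\frac{1}{l}\log\frac{p_{\rXvecl \mid \Xvecl}}{p_{\rXvecl}}$, yielding $\ssmir(\Xvecinf, \rXvecinf) \leq R$ by Def.~\ref{def: limsup_in_prob}, and hence $R$ is no smaller than the RHS of \eqref{eqn: rdf_arb_dt_sou_proc}.

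\textbf{Achievability.} Fix an admissible joint law $F$ with $\ssmir(\Xvecinf, \rXvecinf) = R_{0}$ and any $\gamma > 0$. I would construct a random codebook of size $M_{l} = \lceil 2^{l(R_{0} + 2\gamma)}\rceil$ whose entries are drawn i.i.d.\ from $p_{\rXvecl}$. The encoder transmits the index of any codeword whose normalized information density with the source exceeds $R_{0} + \gamma$, and falls back to emitting the deterministic reference word $\rvecl$ otherwise. A Feinstein-type argument, combined with the definition of $\ssmir$, shows that the failure event $G_{l}^{c}$ (no codeword meets the threshold) has probability vanishing in $l$: under the true joint law the information density stays within the $R_{0} + \gamma$ threshold with probability approaching one, and the expected number of random codewords realizing it grows like $2^{l\gamma}$.

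\textbf{Main obstacle.} The crux lies in the expected-distortion analysis, which is where the uniform-integrability hypothesis enters essentially. On $G_{l}$ the matched codeword inherits the joint law of $F$, so this event contributes at most $\mE\{\maa{d}(\Xvecl, \rXvecl)\} \leq D + o(1)$. On $G_{l}^{c}$ the encoder emits $\rvecl$, contributing $\mE\{\maa{d}(\Xvecl,\rvecl)\mathbf{1}_{G_{l}^{c}}\}$; since $\Pr(G_{l}^{c}) \to 0$ and $\{\maa{d}(\Xvecl,\rvecl)\}_{l \geq 1}$ is uniformly integrable, this expectation vanishes. Letting $\gamma \downarrow 0$ and infimizing over admissible $F$ closes the achievability bound. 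Without uniform integrability, unbounded distortion on the atypical event could violate the average-distortion criterion, which is precisely why the deterministic reference-word hypothesis appears in the theorem.
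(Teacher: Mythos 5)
This theorem is imported by the paper directly from Han's textbook \cite[Thm.~5.5.1]{han2010} and is used as a black box; the paper supplies no proof of it, so there is no ``paper's own proof'' to compare against---only the standard information-spectrum argument from the reference.

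Your converse is correct and is the standard one: the reconstruction $\rXvecl = g_{l}(f_{l}(\Xvecl))$ is supported on at most $2^{lR}$ atoms, the union bound gives $\Pr\{p_{\rXvecl}(\rXvecl) < 2^{-l(R+\gamma)}\} \leq 2^{-l\gamma}$, the deterministic conditional reduces the information density to $-\frac{1}{l}\log p_{\rXvecl}(\rXvecl)$, and Def.~\ref{def: limsup_in_prob} then yields $\ssmir(\Xvecinf;\rXvecinf) \leq R$ for the induced joint law, which by Def.~\ref{def: achi_rate_dist_pair} satisfies the distortion constraint.

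The achievability sketch has two genuine gaps. First, the covering threshold points the wrong way. To lower-bound the probability that an independent codeword drawn from $p_{\rXvecl}$ falls in the target set $A(x)$ for a typical $x$, a change-of-measure argument requires $A(x) = \big\{\hat{x}: \frac{1}{l}\log\frac{p_{\rXvecl|\Xvecl}(\hat{x}|x)}{p_{\rXvecl}(\hat{x})} \leq R_{0}+\gamma\big\}$ (information density bounded \emph{above}), since on that set the Radon--Nikodym factor is at least $2^{-l(R_{0}+\gamma)}$; with your condition ``exceeds $R_{0}+\gamma$'' one only obtains the \emph{upper} bound $\Pr_{p_{\rXvecl}}\{\text{cover}\}\leq 2^{-l(R_{0}+\gamma)}$, so $(1-2^{-l(R_{0}+\gamma)})^{M_{l}}$ need not vanish and the Feinstein step does not close. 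Second, the assertion that on $G_{l}$ ``the matched codeword inherits the joint law of $F$'' is false as written: conditioned on $\Xvecl=x$ and on success, the selected (first-covering) codeword has law $p_{\rXvecl}(\cdot\,|\,A(x))$, not $p_{\rXvecl|\Xvecl}(\cdot\,|\,x)$, and the change-of-measure factor is only bounded below on $A(x)$, not above, so there is no direct comparison to $\mE\{\maa{d}(\Xvecl,\rXvecl)\}$ under $F$. Controlling the on-success expected distortion is the substantive part of the average-distortion achievability proof; it requires either an explicit distortion clause inside the covering set together with uniform integrability to absorb the residual, or a soft-covering/likelihood-encoder argument establishing total-variation closeness of the induced joint law to $F$. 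Your use of uniform integrability on $G_{l}^{c}$ is indeed the correct role of the reference-word hypothesis, but the same scrutiny is missing from the success event, which is where the real work lies.
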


Recalling the Gaussianity of $X_{c}(t)$ and the boundedness of $c_{X}(t, \lambda)$, in Lemma~\ref{lem:Unif_Integ}, 
we establish the uniform integrability of the distortion for the considered scenario, which 
facilitates the use of Thm.~\ref{thm: rdf_arb_dt_proc} in our analysis:

\begin{lemma}
    \label{lem:Unif_Integ}
    Consider a sequence of $l$ real Gaussian \glspl{rv} $\{X_{i}\}_{i = 0}^{l - 1} \equiv \Xvecl$, for which there exists an upper bound for variances of all its elements (i.e., $\exists \alpha < \infty$, s.t. $\mVar\{X_{i}\} \leq \alpha$ for $0 \leq i \leq l - 1$). Then, the sequence of \gls{mse} distortion values between $\Xvecl$ and the all-zero sequence $\zerovec^{l}$ w.r.t. $l$, denoted by $\big\{\maa{d}_{se}(\Xvecl, \zerovec^{l})\big\}_{l = 1}^{\infty}$, is uniformly integrable.
\end{lemma}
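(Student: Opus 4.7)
The plan is to establish uniform integrability via the classical $L^p$-boundedness criterion: if $\sup_{l\ge 1}\mE\{|Y_l|^p\}<\infty$ for some $p>1$, then $\{Y_l\}$ is uniformly integrable. I will take $p=2$ and show that the second moment of the empirical squared-error is uniformly bounded in $l$, exploiting only the marginal Gaussianity of each $X_i$ and the uniform variance bound $\alpha$.

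First, write $Y_l \triangleq \maa{d}_{se}(\Xvecl,\zerovec^l) = \frac{1}{l}\sum_{i=0}^{l-1} X_i^2$, which is non-negative so $|Y_l|=Y_l$. Expand
\begin{equation*}
\mE\{Y_l^2\} = \frac{1}{l^2}\sum_{i=0}^{l-1}\sum_{j=0}^{l-1} \mE\{X_i^2 X_j^2\}.
\end{equation*}
For each cross term I apply the Cauchy--Schwarz inequality, $\mE\{X_i^2 X_j^2\}\le \sqrt{\mE\{X_i^4\}\,\mE\{X_j^4\}}$, so no joint distribution assumption is needed. The key observation is that each $X_i$ is \emph{marginally} Gaussian with zero mean (or not; variance bound suffices after centring) and $\mVar\{X_i\}\le\alpha$, so the Gaussian fourth-moment identity yields $\mE\{X_i^4\}=3(\mVar\{X_i\})^2\le 3\alpha^2$. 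Substituting gives $\mE\{X_i^2 X_j^2\}\le 3\alpha^2$ for every $i,j$, hence
\begin{equation*}
\mE\{Y_l^2\} \le \frac{1}{l^2}\cdot l^2 \cdot 3\alpha^2 = 3\alpha^2,\qquad \forall l\ge 1.
\end{equation*}

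Finally, I convert the uniform second-moment bound to uniform integrability by a Markov-type truncation: for any $u>0$,
\begin{equation*}
\int_{|Y_l|\ge u} |Y_l|\,\mtxt{d}P \;\le\; \frac{1}{u}\int_{|Y_l|\ge u} Y_l^2\,\mtxt{d}P \;\le\; \frac{1}{u}\mE\{Y_l^2\} \;\le\; \frac{3\alpha^2}{u},
\end{equation*}
so $\sup_{l\ge 1}\int_{|Y_l|\ge u}|Y_l|\mtxt{d}P \to 0$ as $u\to\infty$, which is exactly the definition of uniform integrability invoked in the paper.

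There is no real conceptual obstacle here; the main thing to be careful about is \emph{not} assuming joint Gaussianity of the vector $\Xvecl$ (the hypothesis only gives marginal Gaussianity of each $X_i$ together with a uniform variance bound), which is why the proof routes the cross-term estimate through Cauchy--Schwarz rather than through a direct covariance expansion. If it later turns out that joint Gaussianity is actually available in the target application, one could alternatively bound $\mE\{X_i^2 X_j^2\}$ by Isserlis' theorem, but the Cauchy--Schwarz route above is both simpler and strictly more general.
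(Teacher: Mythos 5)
Your proof is correct, and it reaches the same headline bound $\mE\{Y_l^2\}\le 3\alpha^2$ as the paper, but by a genuinely different and more elementary route. The paper diagonalizes the autocorrelation matrix of the vector $\Xvecl$, rewrites $\maa{d}_{se}(\Xvecl,\zerovec^l)$ as a weighted sum of \emph{independent} central $\chi^2_1$ variables, and then bounds the mean and variance of that sum using the trace identity for eigenvalues; this pipeline relies on the vector being \emph{jointly} Gaussian. You instead expand $\mE\{Y_l^2\}$ into cross-terms, bound each $\mE\{X_i^2X_j^2\}$ by Cauchy--Schwarz, and invoke only the marginal fourth-moment identity $\mE\{X_i^4\}=3(\mVar\{X_i\})^2$. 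As you correctly point out, this is all the lemma's hypothesis actually grants you (the statement says ``Gaussian RVs'' with a variance bound, not ``a jointly Gaussian vector''), so your argument is both simpler and strictly more general. Your closing step also differs cosmetically: where the paper cites the $L^p$-boundedness criterion from Williams, you re-derive it inline by a Markov truncation on the event $\{|Y_l|\ge u\}$; both are valid.

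One small cleanup: the parenthetical ``(or not; variance bound suffices after centring)'' is not right as stated. If the $X_i$ have nonzero means the fourth moment is $\mE\{X_i^4\}=m_i^4+6m_i^2\sigma_i^2+3\sigma_i^4$, so the bound $\mE\{X_i^4\}\le 3\alpha^2$ and hence the whole chain fails without a bound on the means; and ``centring'' changes the distortion target away from $\zerovec^l$, so it cannot be applied inside $\maa{d}_{se}(\Xvecl,\zerovec^l)$. In the paper's application this is harmless because the source process is explicitly zero-mean (the paper's own proof also implicitly assumes zero mean when it sets $\mvec{B}^{l-\tilde l}\sim\mGd(\zerovec,\cdot)$), but you should either state the zero-mean assumption or drop the remark rather than suggest it is dispensable.
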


\begin{proof}
    The proof is detailed in Appendix~\ref{sec: pf_lem_uni_intg_app}.
\end{proof}

    \vspace{-0.1cm}

\section{Results}
\label{sec: res}

\vspace{-0.1cm}
As detailed in Sec.~\ref{sec: prob_form_mod}, when $\epsilon \in \mQ$, $\Xepphi[i]$ is a \gls{wscs} process, whose \gls{rdf} was derived in \cite[Thm.~1]{kipnis2018}. Let $\epsilon_{n} \triangleq \frac{\lfloor n \cdot \epsilon \rfloor}{n}$, $n \in \mNplus$, and let $T_{s}(\epsilon_{n}) \triangleq \frac{T_{c}}{p + \epsilon_{n}}$ denote the sampling interval. As $\epsilon_{n}\in\mQ$,  sampling is synchronous and the sampled process $\Xepnphi[i] \triangleq X_{c}\big(i \cdot T_{s}(\epsilon_{n}) + \phi_{s}\big) = X_{c}\big(\frac{i \cdot T_{c}}{p + \epsilon_{n}} + \phi_{s}\big)$ is a \gls{wscs} process
with a maximal correlation length upper bounded by 
$\tau_{c} \triangleq \big\lceil \frac{(p + 1) \cdot \lambda_{c}}{T_{c}} \big\rceil \geq \big\lceil \frac{(p + \epsilon_{n}) \cdot \lambda_{c}}{T_{c}} \big\rceil$. 

Consider a $p_{n}$-dimensional \gls{dt} stationary process $\mvec{X}_{\epsilon_{n}, \phi_{s}}^{p_{n}}[i]$, $i \in\mN$, obtained from $\Xepnphi[i]$ by setting its $m$-th subprocess to $\big(\Xvecepnphipn[i]\big)_{m} = \Xepnphi[i \cdot N_{c} + m]$, $m = 0, 1, \cdots, p_{n} - 1$. The autocorrelation matrix of $\mvec{X}_{\epsilon_{n}, \phi_{s}}^{p_{n}}[i]$ is
    \vspace{-0.2cm}
\begin{equation}
    \label{eqn:WSCS_RDF1}
    \CXvecepnphipn[\Delta] \triangleq \mE\bigg\{\Xvecepnphipn[i] \cdot \Big(\Xvecepnphipn[i + \Delta]\Big)^{T}\bigg\},
\end{equation}
and its \gls{psd} matrix is
\begin{equation}
    \label{eqn:WSCS_RDF2}
    \SXvecepnphipn(f) \triangleq \sum_{\Delta \in \mZ} \CXvecepnphipn[\Delta] \cdot e^{-j 2 \pi f \Delta},
        \vspace{-0.1cm}
\end{equation}
for $-\frac{1}{2} \leq f \leq \frac{1}{2}$. We denote the eigenvalues of $\SXvecepnphipn(f)$ in descending order by $\lamepnphimpn(f)$, $0 \leq m \leq p_{n} - 1$. By \cite[Thm.~1]{kipnis2018}, 
the \gls{rdf} for $\Xepnphi[i]$ for a distortion constraint $D$ is
\begin{equation}
\label{eqn:WSCS_RDF}
    R_{\epsilon_{n}}^{\phi_{s}}(D) = \frac{1}{2 p_{n}} \sum_{m = 0}^{p_{n} - 1} \int_{f = -\frac{1}{2}}^{\frac{1}{2}} \!\!\Bigg(\!\log\bigg(\frac{\lamepnphimpn(f)}{\theta}\bigg)\Bigg)^{+} \!\mtxt{d} f,
        \vspace{-0.1cm}
\end{equation}
where $\theta$ is selected such that
\begin{equation}
    D = \frac{1}{p_{n}} \sum_{m = 0}^{p_{n} - 1} \int_{f = -\frac{1}{2}}^{\frac{1}{2}} \min\Big\{\lamepnphimpn(f), \theta\Big\} \mtxt{d} f.
\end{equation}

As the \gls{rdf} $R_{\epsilon_{n}}^{\phi_{s}}(D)$ depends on $\phi_{s} \in [0, T_{c}\}$, we define
\begin{equation}
\label{eqn:WSCS_RDF_opt}
    R_{\epsilon_{n}}(D) \triangleq \min_{\phi_{s} \in [0, T_{c})} R_{\epsilon_{n}}^{\phi_{s}}(D). 
\end{equation}

In the scenario considered in this work it is assumed that delay of up to $\tau_{c} \cdot T_{s}(\epsilon) + T_{c}$ in \gls{ct} is allowed between subsequently sampled sequences. Then, the \gls{rdf} for $\Xepnphi[i]$ can be obtained as follows:

\begin{theorem}
\label{thm: main_thm}
    Consider the scenario in Sec.~\ref{sec: prob_form_mod}. When a delay of up to $\tau_{c} \cdot T_{s}(\epsilon) + T_{c}$ in \gls{ct} between consecutive sampled sequences is allowed. If the \gls{af} of $X_{c}(t)$ satisfies
    \begin{equation}
    \label{eqn: sdd}
        \min_{0 \leq t < T_{c}} \Bigg\{\! c_{X_{c}}(t, 0) - 2 \cdot \tau_{c} \cdot\! \max_{|\lambda| > \frac{T_{c}}{p + 1}}\! \Big\{\big|c_{X_{c}}(t, \lambda)\big|\Big\}\!\Bigg\}\! \geq\! \gamma_{c} \! > 0,
    \end{equation}
    given a distortion constraint $D \leq \gamma_{c}$, the \gls{rdf} for $\Xepnphi[i]$ is
    \begin{equation}
        \label{eqn:WSACS_RDF}
        R_{\epsilon}(D) = \limsup_{n \to \infty} R_{\epsilon_{n}}(D).
    \end{equation}
\end{theorem}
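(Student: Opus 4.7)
The plan is to invoke Theorem~\ref{thm: rdf_arb_dt_proc}, whose uniform integrability hypothesis is supplied by Lemma~\ref{lem:Unif_Integ} with the deterministic reference word $\zerovec^{l}$, and to sandwich the resulting spectral sup-mutual information rate between matching achievability and converse bounds, both equal to $\limsup_{n \to \infty} R_{\epsilon_{n}}(D)$. The three structural ingredients I expect to rely on are: (i) the finite memory $\lambda_{c}$ of $X_{c}(t)$, which makes any two samples separated by more than $\tau_{c}$ indices statistically independent; (ii) the uniform continuity and boundedness of $c_{X_{c}}(t,\lambda)$, so that on any finite index window the autocovariance of $\Xepphi[i]$ is arbitrarily close to that of $\Xepnphi[i]$ once $n$ is large enough; and (iii) the allowed CT delay of $\tau_{c} \cdot T_{s}(\epsilon) + T_{c}$, which is exactly what is needed to both skip $\tau_{c}$ samples between consecutive processing windows (decorrelating them via (i)) and to reset the initial sampling phase within $[0,T_{c})$ to the optimal $\phi_s^{\ast}$ attaining \eqref{eqn:WSCS_RDF_opt}.

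For achievability, I would fix $n$, pick $\phi_s^{\ast}$ attaining $R_{\epsilon_{n}}(D)$, partition the encoded stream into consecutive windows of length $L_{n}$ chosen as a multiple of $p_{n}$ with $L_{n}/p_{n} \to \infty$ but $L_{n}/n \to 0$, and use the reverse-waterfilling encoder of \cite[Thm.~1]{kipnis2018} designed for the $p_{n}$-dimensional stationary representation of $\Xepnphi[i]$ on each window. By (iii) the windows are independent and each starts at $\phi_s^{\ast}$; within a window the true sample times differ from those of the WSCS approximation by at most $L_{n} \cdot |T_{s}(\epsilon) - T_{s}(\epsilon_{n})| = O(L_{n}/n) \to 0$, so by (ii) the finite autocovariance blocks converge entrywise and hence $\SXvecepnphipn(f)$ built from the true samples converges to its WSCS counterpart. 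Condition \eqref{eqn: sdd} ensures that all eigenvalues $\lamepnphimpn(f)$ are bounded below by $\gamma_{c} > 0$ uniformly in $n$, so for $D \leq \gamma_{c}$ the water-level $\theta$ is bounded away from zero, the integrand $(\log(\lamepnphimpn(f)/\theta))^{+}$ in \eqref{eqn:WSCS_RDF} depends continuously on covariance perturbations, and the achieved per-sample distortion and rate over each window converge to $D$ and $R_{\epsilon_{n}}^{\phi_s^{\ast}}(D) = R_{\epsilon_{n}}(D)$ respectively. Letting first the number of windows tend to infinity and then $n \to \infty$ yields, via Theorem~\ref{thm: rdf_arb_dt_proc}, an achievable rate equal to $\limsup_{n\to\infty} R_{\epsilon_{n}}(D)$.

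For the converse, I would start from \eqref{eqn: rdf_arb_dt_sou_proc} applied to $\Xepphi[i]$ and show that any $F_{\Xvecinf,\rXvecinf}$ meeting the distortion constraint $D$ gives $\ssmir(\Xvecinf,\rXvecinf) \geq R_{\epsilon_{n}}(D + o_{n}(1))$ for every sufficiently large $n$. The idea is to couple the true WSACS law with the WSCS law of $\Xepnphi[i]$ on a rapidly growing window, using (ii) and the $\gamma_{c}$-lower bound from \eqref{eqn: sdd} to show that any code for $\Xepphi[i]$ can be transported to a code for $\Xepnphi[i]$ of the same rate with distortion inflated by $o_{n}(1)$, and then invoking \cite[Thm.~1]{kipnis2018} on the WSCS side. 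Continuity of $R_{\epsilon_{n}}(D)$ in $D$ (again from the water-filling representation and \eqref{eqn: sdd}) followed by passing to the $\limsup$ in $n$ closes the gap.

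The main obstacle I anticipate is the rigorous transfer between the WSACS law and the WSCS approximation: one must control not only entrywise closeness of finite autocovariance matrices but also convergence of the spectral integrals in \eqref{eqn:WSCS_RDF} and of the corresponding normalized log-likelihood ratios appearing in the definition of $\ssmir$, uniformly in blocklength and in $n$, while simultaneously preserving the distortion constraint. I expect condition \eqref{eqn: sdd} and the uniform continuity of $c_{X_{c}}(t,\lambda)$ to be the precise technical inputs that make this transfer go through, and the bounded processing delay to be the operational ingredient that both permits the optimization over $\phi_{s}$ in \eqref{eqn:WSCS_RDF_opt} and enables the block-independence needed to go from single-window performance to the overall \gls{rdf}.
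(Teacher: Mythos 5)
Your overall framing (invoke Thm.~\ref{thm: rdf_arb_dt_proc} via Lemma~\ref{lem:Unif_Integ}, sandwich between a converse and an achievability bound, use \eqref{eqn: sdd} to bound eigenvalues away from zero, and exploit the allowed delay to decorrelate windows and re-sync the phase) matches the paper's strategy. However, your achievability construction contains a concrete inconsistency and your technical route for the transfer between the WSACS and WSCS laws is genuinely different from (and harder than) the one the paper actually takes.

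The inconsistency: you want window lengths $L_{n}$ that are multiples of $p_{n}$ with $L_{n}/p_{n} \to \infty$ while also $L_{n}/n \to 0$. Since $p_{n} = p\cdot n + \lfloor n\epsilon\rfloor \ge p\cdot n$, any $L_{n}$ that is a multiple of $p_{n}$ already satisfies $L_{n}/n \ge p$; if further $L_{n}/p_{n}\to\infty$ then $L_{n}/n\to\infty$. Hence both requirements cannot hold simultaneously, and the drift $L_{n}\bigl|T_{s}(\epsilon)-T_{s}(\epsilon_{n})\bigr| = O(L_{n}/n)$ does not go to zero; it in fact diverges. So the step ``the finite autocovariance blocks converge entrywise, hence $\SXvecepnphipn(f)$ converges, hence the water-filling rate converges'' fails at the very first link of the chain.

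This is exactly the obstacle the paper routes around. Rather than arguing continuity of the spectral integral \eqref{eqn:WSCS_RDF} in $\epsilon_{n}$ (which requires controlling an $l\to\infty$ limit and an $n\to\infty$ limit simultaneously), the paper keeps the blocklength $l$ \emph{fixed and independent of $n$}, and works with the finite-$l$ normalized mutual informations optimized jointly over the initial phase $\phi_{s}$ and the error-covariance matrix (the set $\CsetPhiSvecL$ in Eqn.~\eqref{eqn: fea_set_phi_dis}). Lemma~\ref{lem: convge_autocorr_mats_n} gives uniform entrywise convergence of $\CXvecEpnPhiL$ to $\CXvecEpPhiL$ over $\phi_{s}$, and Lemma~\ref{lem: op_mut_inf_converg} then upgrades this to convergence of the \emph{optimal} finite-$l$ mutual information by combining uniform convergence of $\log\det$ (via \eqref{eqn: sdd}, Taylor remainder control, and eigenvalue continuity) with the Kanniappan convergence-of-infima theorem. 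The achievability codebook is designed for the true $\epsilon$-process (Eqn.~\eqref{eqn: op_pair_mut_inf_ep}), not the $\epsilon_{n}$-process as in your plan, and the $\epsilon_{n}$ quantities enter only in the analysis, in a carefully ordered $\delta$-then-$l$-then-$n$-then-$k$ limiting argument (Eqns.~\eqref{eqn: delta_min_mut_inf_converg}--\eqref{eqn: asymp_achi_rdp}). The guard-interval overhead is controlled by bounding the conditional differential-entropy correction $h(\SvecTwo|\SvecOne)-h(\XvecTwo|\XvecOne)$ via Schur complements, Hadamard's inequality, permutation-matrix norm identities, and the strict diagonal dominance implied by \eqref{eqn: sdd}, yielding a correction term $k\tau_{c}\tldgamma$ that vanishes per-sample as $l\to\infty$.

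Your converse outline (``couple the laws and transport the code'') is also not what the paper does and would require making the process-level coupling quantitative, which you flag but do not resolve. The paper's converse instead starts from the Han-type lower bound on $R_{\epsilon}(D)$, plugs the finite-$l$ optimal solution, replaces the $\epsilon$-mutual information by the limit over $n$ of the $\epsilon_{n}$-mutual informations via Lemma~\ref{lem: op_mut_inf_converg}, and then uses that each finite-$l$ mutual information upper-bounds the synchronous RDF $R_{\epsilon_{n}}(D)$ by definition (Eqn.~\eqref{eqn: converse_2nd_scheme}). This stays entirely at the level of covariance matrices and avoids any coupling or code-transport argument.

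In short: the ingredients you identify are the right ones, but your achievability plan is internally contradictory because of the $p_{n}\sim (p+\epsilon)n$ growth, and the WSACS-to-WSCS transfer you defer is precisely the point where the paper departs from the spectral-integral route and introduces the finite-$l$ optimization plus Kanniappan's theorem (Lemma~\ref{lem: op_mut_inf_converg}) as the key new ingredient.
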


\begin{proof}
    The proof is detailed in Appendix~\ref{sec: pf_main_thm}.
\end{proof}

\section{Numerical Evaluations and Discussion}
\label{sec: num_eva_dis}


Let $\pulfunc(t)$ denote a periodic function  with a period of $1$. Define   
a single period of $\pulfunc(t)$ as follows:
\begin{equation*}
\pulfunc(t) \triangleq
    \begin{cases}
        \frac{t}{t_{rf}} & , t \in [0, t_{rf}) \\
        1 & , t \in [t_{rf}, t_{rf} + t_{dc}) \\
        1 - \frac{t - t_{dc} - t_{rf}}{t_{rf}} & , t \in [t_{rf} + t_{dc}, 2 \cdot t_{rf} + t_{dc}) \\
        0 & , t \in [2 \cdot t_{rf} + t_{dc}, 1)
    \end{cases},
\end{equation*}
where the rise/fall time $t_{rf} = 0.01$ and the \gls{dc} time $t_{dc} \in [0, 0.98]$. Set the period of $c_{X_{c}}(t, \lambda)$ to be $T_{c} = 5 \microsec$, and define the normalized initial sampling phase $\phi_{s} \in [0, T_{c})$ as $\norisp \triangleq \frac{\phi_{s}}{T_{c}} \in [0, 1)$. Next, let the variance of $c_{X_{c}}(t)$ be defined as $c_{X_{c}}(t, 0) \triangleq 2 + 8 \cdot \pulfunc\big(\frac{t}{T_{c}} - \norisp\big)$. Setting the maximal autocorrelation length of $X_{c}(t)$ to $\lambda_{c} = 4 \microsec$, 
we define $c_{X_{c}}(t, \lambda)$  for $\lambda>0$ as
\begin{equation}
    c_{X_{c}}(t, \lambda) \triangleq
    \begin{cases}
        e^{-\lambda \cdot 10^{6.1}} \cdot c_{X_{c}}(t, 0) & , 0 \leq \lambda \leq \lambda_{c} \\
        0 & , \lambda > \lambda_{c}
    \end{cases}.
\end{equation}
For $\lambda < 0$, $c_{X_{c}}(t, \lambda) = c_{X_{c}}(t + \lambda, -\lambda)$. 
We carry out the numerical evaluations with  $\epsilon = \frac{\pi}{7}$ and $p = 2$. As $\epsilon_{n} \triangleq \frac{\lfloor n \cdot \epsilon\rfloor}{n} \in \mQ$, $\Xepnphi[i] $ corresponds to a \gls{wscs} process with period of statistics $p_{n} \triangleq p \cdot n + \lfloor n \cdot \epsilon \rfloor$, for which  $R_{\epsilon_{n}}^{\phi_{s}}(D)$ is evaluated via Eqn.~\eqref{eqn:WSCS_RDF}. 
Note that for obtaining $R_{\epsilon}(D)$, namely, the \gls{rdf} for the \gls{wsacs} process $\Xepphi[i]$ via Eqn. \eqref{eqn:WSACS_RDF}, 
we verified that Eqn.~\eqref{eqn: sdd} and the condition $D \leq \gamma_{c}$ are satisfied as well.



Figs.~\ref{fig: rdf_vs_n_1} and \ref{fig: rdf_vs_n_2} depict the values of $R_{\epsilon_{n}}^{\phi_{s}}(D)$ as $n$ increases from $1$ to $150$ for $\norisp = 0$ and $\norisp = \frac{\pi}{5}$, respectively, for $D = 0.15$ and $t_{dc} \in \{0.4, 0.7\}$. In both figures, $R_{\epsilon_{n}}^{\phi_{s}}(D)$ is higher when $t_{dc}$ is higher, since a larger time-averaged variance is obtained following a higher $t_{dc}$, which requires more bits per sample to maintain the same distortion (i.e., a higher \gls{rdf}). Observe that when $n$ is small ($n < 25$), $R_{\epsilon_{n}}^{\phi_{s}}(D)$ exhibits significant variations whose pattern significantly depends on $\norisp$. This follows as for small values of $n$, we obtain a larger $T_{s}(\epsilon_{n})$ and a smaller $p_{n}$ consisting of samples sparsely distributed over the period of the variance function of $X_{c}(t)$. This increases the sensitivity of $R_{\epsilon_{n}}^{\phi_{s}}(D)$ to variations in $n$ and to the normalized initial sampling phase $\norisp$. However when $n$ is large ($n \geq 25$), the variation of $R_{\epsilon_{n}}^{\phi_{s}}(D)$ becomes stable and regular. This is because $\lim_{n \to \infty} T_{s}(\epsilon_{n}) = T_{s}(\epsilon)$, which implies that the sampling interval varies very little w.r.t. $n$. Thus, the variation pattern does not exhibit significant variations w.r.t $n$ and $\norisp$. Observe also that due to the nonstationarity of $X_{c}(t)$, $R_{\epsilon_{n}}^{\phi_{s}}(D)$ does not converge to a fixed limiting value as $n$ increases.


\begin{figure}[H]
    \centering
    \includegraphics[scale=0.64]{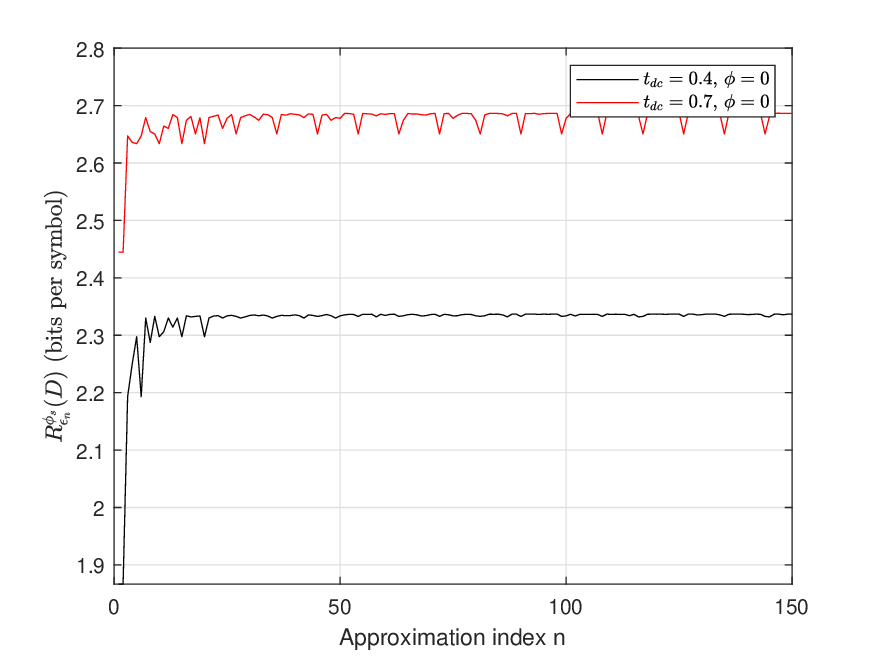}
    \caption{$R_{\epsilon_{n}}^{\phi_{s}}(D)$ versus $n$ for $\phi_{s}=0$.}
    \label{fig: rdf_vs_n_1}
\end{figure}

\begin{figure}[H]
    \centering
    \includegraphics[scale=0.64]{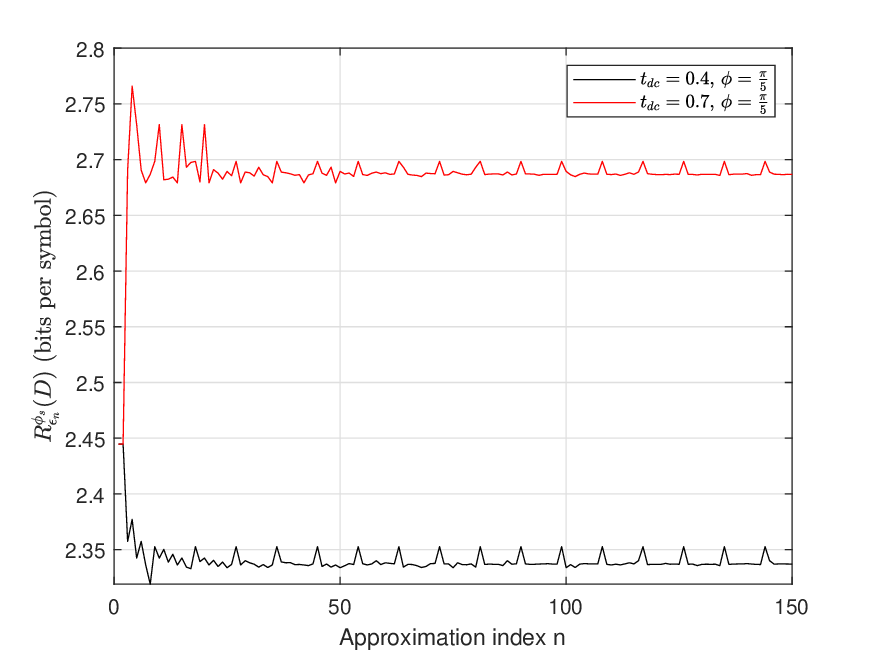}
    \caption{$R_{\epsilon_{n}}^{\phi_{s}}(D)$ versus $n$ for $\phi_{s}=\frac{\pi}{5} T_{c}$.}
    \label{fig: rdf_vs_n_2}
\end{figure}

Fig.~\ref{fig: rdf_vs_nor_ini_samp_pha_1} depicts the variation of $R_{\epsilon_{n}}^{\phi_{s}}(D)$ as $\norisp$ changes from $0$ to $2$ for $t_{dc} = 0.4$, $n \in \{1, 100\}$ and $D = 0.15$. When $n = 1$ ($p_{n} = 2$), $R_{\epsilon_{n}}^{\phi_{s}}(D)$ varies significantly w.r.t. $\norisp$ with a period $1$, which stands in contrast to the case for $n = 100$ ($p_{n} = 244$), where $R_{\epsilon_{n}}^{\phi_{s}}(D)$ varies very little. This observation agrees with the insight from Figs.~\ref{fig: rdf_vs_n_1} and \ref{fig: rdf_vs_n_2}: The asynchronous sampling setup is asymptotically obtained as $n$ is large enough, making $R_{\epsilon_{n}}^{\phi_{s}}(D)$ independent of $\norisp$. Lastly, Fig.~\ref{fig: rdf_vs_D} depicts the variation of $R_{\epsilon_{n}}^{\phi_{s}}(D)$ as $D$ increases from $0.02$ to $0.3$ for $\norisp = \frac{\pi}{5}$, $n = 100$ and $t_{dc} \in \{0.4, 0.7\}$. Observe that $R_{\epsilon_{n}}^{\phi_{s}}(D)$ is a monotonically decreasing convex function w.r.t $D$. This is because for a higher distortion level, less bits per sample are required in the compression.

\begin{figure}[H]
    \centering
    \includegraphics[scale=0.64]{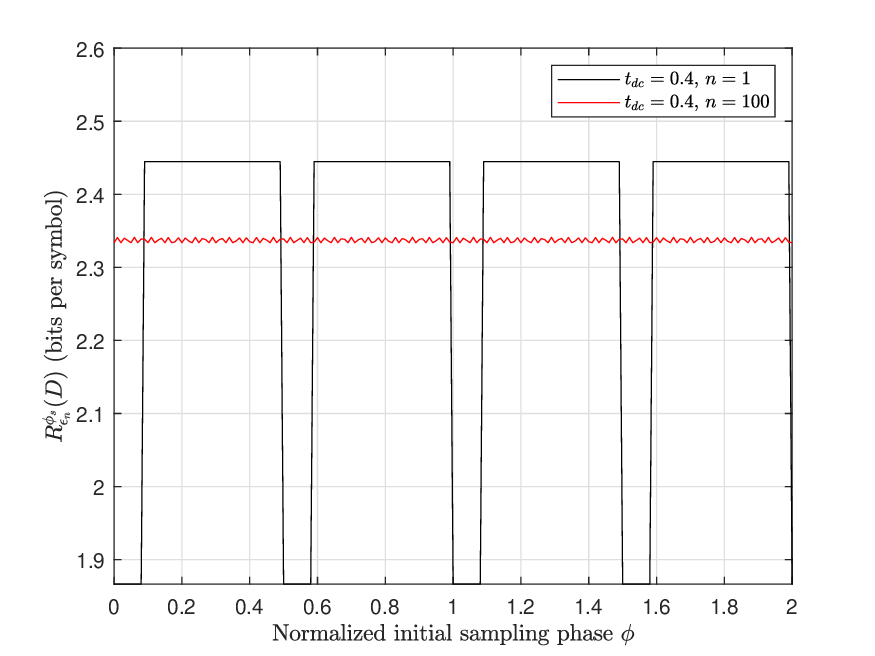}
    \caption{$R_{\epsilon_{n}}^{\phi_{s}}(D)$ versus $\norisp \triangleq \frac{\phi_{s}}{T_{c}}$ for $t_{dc}=0.4$.}
    \label{fig: rdf_vs_nor_ini_samp_pha_1}
\end{figure}

\begin{figure}[H]
    \centering
    \includegraphics[scale=0.64]{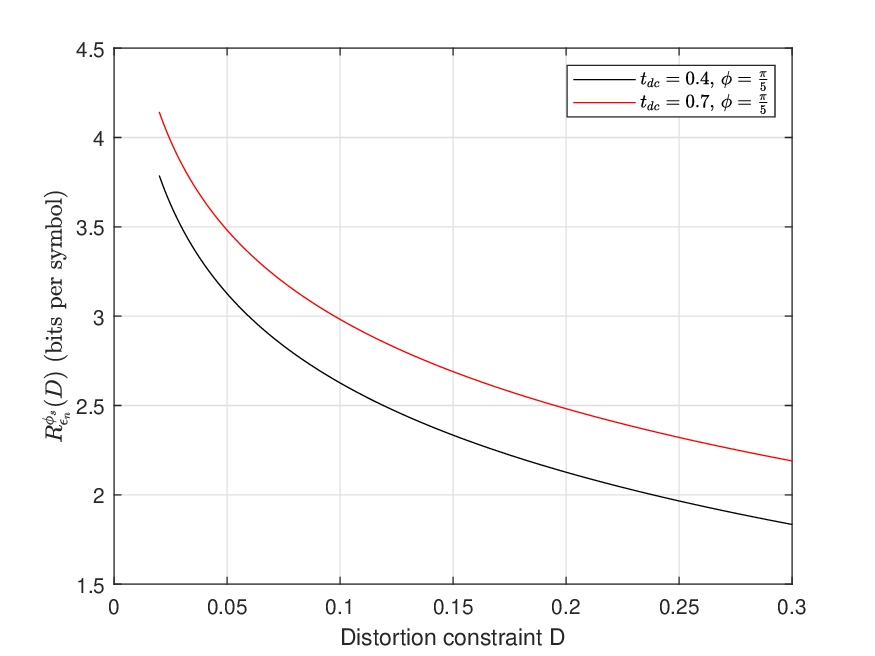}
    \caption{$R_{\epsilon_{n}}^{\phi_{s}}(D)$ versus $D$ for $\phi_{s} = \frac{\pi}{5} T_{c}$.}
    \label{fig: rdf_vs_D}
\end{figure}

\vspace{-0.1cm}
\section{Conclusion}
\label{sec: con}
We have characterized the \gls{rdf} for \gls{dt} \gls{wsacs} Gaussian processes with memory, arising from asynchronously sampling \gls{ct} \gls{wscs} Gaussian source processes. As information-instability of \gls{wsacs} processes renders the conventional information-theoretic tools inapplicable, we employed the information-spectrum framework to derive the \gls{rdf}. In our scenario, a finite and bounded delay between consecutive sampled sequences is allowed, which facilitates having the optimal initial sampling phases at every processed sequence, thereby minimizing the overall compression rate. The resulting  \gls{rdf} is expressed as the limit of a sequence of \glspl{rdf} for synchronous sampling. 
This work demonstrates the relationship between asynchronous sampling, memory and compression rates, which is relevant for facilitating accurate and efficient source coding of communications signals.

\appendix   

\setcounter{lemma}{0}
\renewcommand{\thelemma}{\thesection.\arabic{lemma}}

\setcounter{equation}{0}
\renewcommand{\theequation}{\thesection.\arabic{equation}}

\setcounter{theorem}{0}
\renewcommand{\thetheorem}{\thesection.\arabic{theorem}}

\begin{appendices}

\section{Proof of Lemma~\ref{lem:Unif_Integ}}
\label{sec: pf_lem_uni_intg_app}

Let $\CXvecL$ denote the correlation matrix of $\XvecL$. As $\CXvecL$ is not necessarily a full-rank matrix, we let $\rank\big(\CXvecL\big) = l - \tilde{l}$, where $\tilde{l}$ is the number of degenerate elements in $\XvecL$. The eigenvalue decomposition of the matrix $\CXvecL$\footnote{As $\CXvecL$ is an autocorrelation matrix, it is necessarily symmetric and positive semidefinite and all its eigenvalues are real and nonnegative.} is given as (see \cite[Thm.~11.27]{banerjee2014} and \cite[Sec.~IV-A]{baktash2017})
\begin{equation}
\label{eqn: eig_decp_baktash2017}
    \CXvecL = \begin{bmatrix} \PmatRl & \PmatNl \end{bmatrix} 
    \cdot \begin{bmatrix} 
            \LammattlXvecL & \zeromat^{(l - \tilde{l}) \times \tilde{l}} \\ 
            \zeromat^{\tilde{l} \times (l - \tilde{l})} & \zeromat^{\tilde{l} \times \tilde{l}} 
           \end{bmatrix} \cdot \begin{bmatrix} \PmatRl & \PmatNl \end{bmatrix}^{T},
\end{equation}
where the $(l - \tilde{l}) \times (l - \tilde{l})$ square matrix $\LammattlXvecL$ is a diagonal matrix holding $l - \tilde{l}$ positive eigenvalues, the columns of the $l \times (l - \tilde{l})$ matrix $\PmatRl$ form an orthonormal basis of $\range\big(\CXvecL\big)$, the columns of the $l \times \tilde{l}$ matrix $\PmatNl$ form an orthonormal basis of $\nulspc\big(\CXvecL\big)$, and the $l \times l$ square matrix $\begin{bmatrix} \PmatRl & \PmatNl \end{bmatrix}$ is orthogonal.

Then, we obtain
\begin{equation}
\label{eqn: gauss_dis_zero}
    \begin{bmatrix} \PmatRl & \PmatNl \end{bmatrix}^{T} \cdot \XvecL \eqdist \begin{bmatrix} \mvec{B}^{l - \tilde{l}} \\ \zerovec^{\tilde{l}} \end{bmatrix},
\end{equation}
where
\begin{equation}
\label{eqn: gauss_dis}
    \mvec{B}^{l - \tilde{l}} \sim \mGd\Big(\zerovec^{l - \tilde{l}}, \LammattlXvecL\Big).
\end{equation}
Since $\LammattlXvecL$ is real, symmetric and positive definite, there exists a unique $(l - \tilde{l}) \times (l - \tilde{l})$ square matrix $\Rmatltl$, which is also real, symmetric and positive definite, s.t. $\LammattlXvecL = \Big(\Rmatltl\Big)^{2}$ \footnote{\label{fn: mat_squ_mat}For a Hermitian positive definite matrix $\mmat{A}$, there exists a unique Hermitian positive definite matrix $\mmat{B}$, s.t. $\mmat{A} = \mmat{B}^{2}$ (see \cite[Sec.~1.1]{bhatia2007}, \cite[Thm.~7.2.6-(a)]{horn2012}).}. Following Eqn.~\eqref{eqn: gauss_dis}, we obtain
\begin{equation}
\label{eqn: gamma_gauss}
    \GammavecLtlL \triangleq \Big(\Rmatltl\Big)^{-1} \cdot \mvec{B}^{l - \tilde{l}} \sim \mGd\Big(\zerovec^{l - \tilde{l}}, \idmat^{(l - \tilde{l}) \times (l - \tilde{l})}\Big).
\end{equation}

Next, considering the \gls{mse} distortion between $\XvecL$ and $\zerovec^{l}$, we obtain
    \begin{align}
        & \maa{d}_{se}\Big(\XvecL, \zerovec^{l}\Big) \\
        & = \frac{1}{l} \cdot \big(\XvecL\big)^{T} \cdot \XvecL \\
        \nonumber & \eqdist \frac{1}{l} \cdot \big(\XvecL\big)^{T} \cdot 
        \begin{bmatrix} \PmatRl & \PmatNl \end{bmatrix} \cdot \begin{bmatrix} \PmatRl & \PmatNl \end{bmatrix}^{T} 
        \cdot \XvecL \\
        \nonumber & \labelrel \eqdist{step: gauss_zero} \frac{1}{l} \cdot \big(\XvecL\big)^{T} \cdot \PmatRl \cdot \Big(\PmatRl\Big)^{T} \cdot \XvecL \\
        \nonumber & \eqdist \frac{1}{l} \cdot \big(\XvecL\big)^{T} \cdot \PmatRl \cdot \Big(\Rmatltl\Big)^{-1} \cdot \LammattlXvecL \cdot \Big(\Rmatltl\Big)^{-1} \cdot \Big(\PmatRl\Big)^{T} \cdot \XvecL \\
        \label{eqn: d_eig_chi_squ} & \labelrel \eqdist{step: gamma_gauss} \frac{1}{l} \cdot \Big(\GammavecLtlL\Big)^{T} \cdot \LammattlXvecL \cdot \GammavecLtlL,
    \end{align}
where \eqref{step: gauss_zero} follows from Eqn.~\eqref{eqn: gauss_dis_zero} and \eqref{step: gamma_gauss} follows from Eqn.~\eqref{eqn: gamma_gauss}.

Due to the Gaussianity of the vector $\GammavecLtlL$ and the diagonality of the matrix $\LammattlXvecL$, following Eqn.~\eqref{eqn: d_eig_chi_squ}, we can represent  $\maa{d}_{se}\big(\XvecL, \zerovec^{l}\big)$ as
\begin{equation}
\label{eqn: nor_cros_prod_X}
    \maa{d}_{se}\big(\XvecL, \zerovec^{l}\big) \eqdist \frac{1}{l} \sum_{i = 0}^{l - \tilde{l} - 1} \Big(\LammattlXvecL\Big)_{i, i} \cdot (\gamma_{i})^{2},
\end{equation}
where $\gamma_{i}$, $0 \leq i \leq l - \tilde{l} - 1$, denotes the $i$-th element of the vector $\GammavecLtlL$. Observe that by Eqn.~\eqref{eqn: gamma_gauss}, $\gamma_{i}$, $0 \leq i \leq l - \tilde{l} - 1$, are \gls{iid} standard Gaussian \glspl{rv}, thus $(\gamma_{i})^{2}$, $0 \leq i \leq l - \tilde{l} - 1$, are \gls{iid} central chi-square \glspl{rv} with a single degree of freedom \cite[Sec.~3.8.17]{shynk2012}.
Consequently, from Eqn.~\eqref{eqn: nor_cros_prod_X}, we obtain
    \begin{align}
        \mE\Big\{\maa{d}_{se}\big(\XvecL, \zerovec^{l}\big)\Big\} & = \nonumber \mE\bigg\{\frac{1}{l} \sum_{i = 0}^{l - \tilde{l} - 1} \Big(\LammattlXvecL\Big)_{i,i} \cdot (\gamma_{i})^{2}\bigg\} \\
        & \nonumber = \frac{1}{l} \sum_{i = 0}^{l - \tilde{l} - 1} \Big(\LammattlXvecL\Big)_{i,i} \cdot \mE\big\{(\gamma_{i})^{2}\big\} \\
        & \labelrel ={step: e_chi_squ_rv} \frac{1}{l} \sum_{i = 0}^{l - \tilde{l} - 1} \Big(\LammattlXvecL\Big)_{i,i} \label{eqn: e_dist_seq_up_bou} \labelrel \leq{step: boud_var} \rho,
    \end{align}
where \eqref{step: e_chi_squ_rv} follows as $\mE\big\{(\gamma_{i})^{2}\big\} = 1$, $0 \leq i \leq l - \tl - 1$ \cite[Sec.~3.8.17]{shynk2012} and \eqref{step: boud_var} follows as $\mVar\{X_{i}\} \leq \rho<\infty$ by assumption and as for the square matrix $\CXvecL$, its sum of eigenvalues equals to its trace \cite[Thm.~11.5]{banerjee2014}. 
We therefore upper bound $\mE\Big\{\big|\maa{d}_{se}\big(\XvecL, \zerovec^{l}\big)\big|^{2}\Big\}$ as follows:
    \begin{align*}
        & \mE\Big\{\big|\maa{d}_{se}\big(\XvecL, \zerovec^{l}\big)\big|^{2}\Big\} \\
        & = \mVar\Big\{\maa{d}_{se}\big(\XvecL, \zerovec^{l}\big)\Big\} + \bigg(\mE \Big\{\maa{d}_{se}\big(\XvecL, \zerovec^{l}\big)\Big\}\bigg)^{2} \\
        & \labelrel = {step: explicit upper bound on square} \frac{1}{l^{2}} \sum_{i = 0}^{l - \tilde{l} - 1} \Big(\big(\LammattlXvecL\big)_{i,i}\Big)^{2} \cdot \mVar\big\{(\gamma_{i})^{2}\big\} + \bigg(\mE\Big\{\maa{d}_{se}\big(\XvecL, \zerovec^{l}\big)\Big\}\bigg)^{2} \\
        & \labelrel \leq {step: ind_e_chi_squ} \frac{1}{l^{2}} \sum_{i = 0}^{l - \tilde{l} - 1} \Big(\big(\LammattlXvecL\big)_{i,i}\Big)^{2} \cdot \mVar\big\{(\gamma_{i})^{2}\big\} + {\rho}^{2} \\
        & \labelrel = {step: var_chi_squ_rv} \frac{2}{l^{2}} \sum_{i = 0}^{l - \tilde{l} - 1} \Big(\big(\LammattlXvecL\big)_{i,i}\Big)^{2} + {\rho}^{2} \\
        & \labelrel \leq {step: pos} 2 \cdot \bigg(\frac{1}{l} \sum_{i = 0}^{l - \tilde{l} - 1} \big(\LammattlXvecL\big)_{i,i}\bigg)^{2} + {\rho}^{2} \labelrel \leq {step:upperbound unif integ} 3 {\rho}^{2} < \infty,
    \end{align*}
where \eqref{step: explicit upper bound on square} follows from the statistical independence between $(\gamma_{i})^{2}$, $0 \leq i \leq l - \tilde{l} - 1$; \eqref{step: ind_e_chi_squ} follows from Eqn.~\eqref{eqn: e_dist_seq_up_bou}; \eqref{step: var_chi_squ_rv} follows from $\mVar\big\{(\gamma_{i})^{2}\big\} = 2$, $0 \leq i \leq l - \tl - 1$ \cite[Sec.~3.8.17]{shynk2012}; \eqref{step: pos} follows as $\big(\LammattlXvecL\big)_{i,i}>0$, $0 \leq i \leq l - \tilde{l} - 1$, implying that the square of their sum is not smaller than the sum of their squares; and \eqref{step:upperbound unif integ} follows as $\mVar\{X_{i}\} \leq \rho$ and the sum of eigenvalues of the square matrix $\CXvecL$ equals to its trace \cite[Thm.~11.5]{banerjee2014}. Since this upper bound is independent of $l$, \blk it follows that $\sup_{l \in \mN} \mE\Big\{\big|\maa{d}_{se}\big(\XvecL, \zerovec^{l}\big)\big|^{2}\Big\} < \infty$, thus, by \cite[Sec.~13.3-(a)]{williams1991}, we conclude \blk that the sequence $\Big\{\maa{d}_{se}\big(\XvecL, \zerovec^{l}\big)\Big\}_{l = 1}^{\infty}$ is uniformly integrable\footnote{For a class of \glspl{rv} $\mset{C}$, for some $p > 1$, if $\exists \alpha < \infty$, s.t. $\mE\big\{|X|^{p}\big\} < \alpha$, $\forall X \in \mset{C}$, then $\mset{C}$ is uniformly integrable (see \cite[Sec.~13.3-(a)]{williams1991}).}.

\setcounter{equation}{0}

\section{Proof of Thm.~\ref{thm: main_thm}}
\label{sec: pf_main_thm}

Denote a sequence of $l$ symbols collected from the \gls{dt} \gls{wscs} Gaussian process $\XEpnPhi[i]$ by $\big\{\XEpnPhi[i]\big\}_{i = 0}^{l - 1} \equiv \XvecEpnPhiL$. For proving Thm.~\ref{thm: main_thm}, we first prove two auxiliary lemmas.

\begin{lemma}
\label{lem: convge_autocorr_mats_n}
    The autocorrelation matrix of $\XvecEpnPhiL$ uniformly converges to that of $\XvecEpPhil$ as $n \to \infty$ over $\phi_{s} \in [0, T_{c})$ elementwisely, i.e.,
    \begin{equation}
    \label{eqn: convg_autocorr_mat_n_eps}
        \ulim_{n \to \infty} \Big(\CXvecEpnPhiL\Big)_{u, v} = \Big(\CXvecEpPhiL\Big)_{u, v},
    \end{equation}
    over $\phi_{s} \in [0, T_{c})$ for $0 \leq u, v \leq l - 1$.
\end{lemma}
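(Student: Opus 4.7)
The plan is to express both sides of the claimed equality~\eqref{eqn: convg_autocorr_mat_n_eps} as evaluations of the CT autocorrelation function $c_{X_{c}}(\cdot,\cdot)$ at specific time-lag pairs determined by $u,v,\phi_{s}$, and then leverage the uniform continuity of $c_{X_{c}}$ together with the convergence $T_{s}(\epsilon_{n}) \to T_{s}(\epsilon)$ to push the convergence through uniformly in $\phi_{s}$.

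First I would write out the matrix elements explicitly. By definition of the sampled process, for $0 \leq u,v \leq l-1$,
\begin{equation*}
\bigl(\CXvecEpnPhiL\bigr)_{u,v} = \mE\bigl\{X_{c}(u \cdot T_{s}(\epsilon_{n}) + \phi_{s}) \cdot X_{c}(v \cdot T_{s}(\epsilon_{n}) + \phi_{s})\bigr\} = c_{X_{c}}\bigl(u \cdot T_{s}(\epsilon_{n}) + \phi_{s},\, (v-u) \cdot T_{s}(\epsilon_{n})\bigr),
\end{equation*}
and analogously with $\epsilon_{n}$ replaced by $\epsilon$ for the right-hand side of~\eqref{eqn: convg_autocorr_mat_n_eps}. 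Next I would bound the deviations of the two arguments of $c_{X_{c}}$. Since $T_{s}(\epsilon_{n}) - T_{s}(\epsilon) = T_{c} \cdot (\epsilon - \epsilon_{n})/\bigl((p+\epsilon_{n})(p+\epsilon)\bigr)$ and $|\epsilon - \epsilon_{n}| \leq 1/n$, we obtain $|T_{s}(\epsilon_{n}) - T_{s}(\epsilon)| \leq T_{c}/(n \cdot p^{2})$, which is independent of $\phi_{s}$. Consequently, the shift in the time argument is at most $u \cdot |T_{s}(\epsilon_{n}) - T_{s}(\epsilon)| \leq (l-1) \cdot T_{c}/(n p^{2})$, and the shift in the lag argument is at most $|v-u| \cdot |T_{s}(\epsilon_{n}) - T_{s}(\epsilon)| \leq (l-1) \cdot T_{c}/(n p^{2})$; both bounds are uniform in $\phi_{s} \in [0, T_{c})$ and in $0 \leq u,v \leq l-1$.

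I would then invoke the uniform continuity of $c_{X_{c}}(t,\lambda)$ on $\mR^{2}$ that is assumed at the top of Section~\ref{sec: prob_form_mod}: for any $\eta > 0$ there exists $\delta > 0$ with $|c_{X_{c}}(t_{1},\lambda_{1}) - c_{X_{c}}(t_{2},\lambda_{2})| < \eta$ whenever $|t_{1}-t_{2}| < \delta$ and $|\lambda_{1}-\lambda_{2}| < \delta$. Choosing $n$ large enough that $(l-1) \cdot T_{c}/(n p^{2}) < \delta$ forces both the time-shift and the lag-shift below $\delta$ simultaneously for every $u,v$ in the index range and every $\phi_{s} \in [0, T_{c})$, so that
\begin{equation*}
\sup_{\phi_{s} \in [0, T_{c})} \Bigl|\bigl(\CXvecEpnPhiL\bigr)_{u,v} - \bigl(\CXvecEpPhiL\bigr)_{u,v}\Bigr| < \eta,
\end{equation*}
which is precisely the uniform convergence claimed in~\eqref{eqn: convg_autocorr_mat_n_eps}.

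I do not anticipate a substantive obstacle: the only structural facts being used are (i) $\phi_{s}$ enters only as an additive shift in the first coordinate of $c_{X_{c}}$, so it plays no role in the $\delta$-threshold, and (ii) $l$ is fixed, so $u$ and $|v-u|$ are bounded and the deviations are automatically controlled. The main subtlety worth flagging is merely to record the explicit quantitative bound $|T_{s}(\epsilon_{n}) - T_{s}(\epsilon)| \leq T_{c}/(n p^{2})$ so that the $n$-dependence of the bound makes the uniformity in $\phi_{s}$ manifest, rather than relying on an unquantified continuity-in-$n$ argument.
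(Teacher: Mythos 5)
Your proof is correct and follows essentially the same route as the paper: express $(\CXvecEpnPhiL)_{u,v}$ and $(\CXvecEpPhiL)_{u,v}$ as evaluations of $c_{X_c}$, use $|\epsilon-\epsilon_n|\leq 1/n$ to drive $T_s(\epsilon_n)\to T_s(\epsilon)$, and appeal to uniform continuity of $c_{X_c}$ to transfer that convergence. In fact your treatment is slightly tighter than the paper's: the paper establishes pointwise convergence via the sequential criterion of continuity at a fixed $\phi_s$ and then asserts the $\sup_{\phi_s}$ version without spelling out why the bound is $\phi_s$-independent, whereas you make that explicit by noting (i) the quantitative bound $|T_s(\epsilon_n)-T_s(\epsilon)|\leq T_c/(np^2)$ is free of $\phi_s$, (ii) $\phi_s$ enters only as an additive shift in the first argument of $c_{X_c}$, and (iii) for fixed $l$ the index shifts $u$ and $|v-u|$ are bounded, so a single $\delta$ from \emph{uniform} continuity covers all $(\phi_s,u,v)$ simultaneously. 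That is exactly the missing glue the paper's proof leans on implicitly, so your version is a legitimate, slightly more careful rendering of the same argument rather than a different one.
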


\begin{proof}
    First, recall that in Section~\ref{sec: res}, $\epsilon_{n} \triangleq \frac{\lfloor n \cdot \epsilon \rfloor}{n}$, $n \in \mNplus$. Thus, it follows that $\frac{n \cdot \epsilon - 1}{n} \leq \epsilon_{n} \leq \frac{n \cdot \epsilon}{n}$, or equivalently, $\epsilon - \frac{1}{n} \leq \epsilon_{n} \leq \epsilon$, and therefore 
    \begin{equation}
    \label{eqn: n_conv_epsilon}
        \lim_{n \to \infty} \epsilon_{n} = \epsilon.
    \end{equation}
    Next, recall that $p \in \mNplus$ and $\epsilon \in [0, 1)$, thus $p + \epsilon_{n} > 0$, $n\in\mNplus$. Then, by Eqn.~\eqref{eqn: n_conv_epsilon}, we obtain $\lim_{n \to \infty} p + \epsilon_{n} = p + \epsilon > 0$. This implies\footnote{If a nonzero real sequence $\{a_{n}\}$, $n \in \mN$, satisfies $\lim_{n \to \infty} a_{n} = a$, $a \neq 0$, then $\lim_{n \to \infty} \frac{1}{a_{n}} = \frac{1}{a}$ \cite[Lemma~9.5]{ross2013}.}
    \begin{equation}
    \label{eqn: p_epsilon_convge}
        \lim_{n \to \infty} \frac{1}{p + \epsilon_{n}} = \frac{1}{p + \epsilon} .
    \end{equation}
    
    As introduced in Section~\ref{sec: prob_form_mod}, the function $c_{X_{c}}(t, \lambda)$ is uniformly continuous in both $t \in \mR$ and $\lambda \in \mR$, therefore it is continuous at the point $\Big(t = \frac{i \cdot T_{c}}{p + \epsilon} + \phi_{s}, \lambda = \frac{\Delta \cdot T_{c}}{p + \epsilon}\Big)$. By Eqn.~\eqref{eqn: p_epsilon_convge} and the definition of continuity\footnote{Sequential criterion for continuity (see \cite[Sec.~5.1.3]{bartle2018}): A real function $f(a): \mset{A} \mapsto \mR$ is continuous at the point $c \in \mset{A}$ if and only if for any real sequence $\{c_{n}\}$, $c_{n}\in\mset{A}$, $n \in \mN$, which satisfies $\lim_{n \to \infty} c_{n} = c$, it is obtained that $\lim_{n \to \infty} f(c_{n}) = f(c)$.}, we have
    \begin{equation}
    \label{eqn: autocorr_func_eps_n_convge}
        \lim_{n \to \infty} c_{X_{c}} \Bigg(\frac{i \cdot T_{c}}{p + \epsilon_{n}} + \phi_{s}, \frac{\Delta \cdot T_{c}}{p + \epsilon_{n}}\Bigg) = c_{X_{c}}\Bigg(\frac{i \cdot T_{c}}{p + \epsilon} + \phi_{s}, \frac{\Delta \cdot T_{c}}{p + \epsilon}\Bigg).
    \end{equation}
    
    Then, we obtain
    \begin{align}
        \lim_{n \to \infty} \Big(\CXvecEpnPhiL\Big)_{u, v} & = \lim_{n \to \infty} c_{X_{c}} \Bigg(\frac{u \cdot T_{c}}{p + \epsilon_{n}} + \phi_{s}, \frac{v \cdot T_{c}}{p + \epsilon_{n}}\Bigg) \\
        & \labelrel ={step: c_n_epsilon_convge} c_{X_{c}}\Bigg(\frac{u \cdot T_{c}}{p + \epsilon} + \phi_{s}, \frac{v \cdot T_{c}}{p + \epsilon}\Bigg) \\
        & \label{eqn: c_epsilon_def} = \Big(\CXvecEpPhiL\Big)_{u, v},
    \end{align}
    for $0 \leq u, v \leq l - 1$, where \eqref{step: c_n_epsilon_convge} follows from Eqn.~\eqref{eqn: p_epsilon_convge}. As  Eqn.~\eqref{eqn: c_epsilon_def} holds $\forall \phi_{s} \in [0, T_{c})$, finally we obtain
    \begin{equation*}
        \lim_{n \to \infty} \max_{\phi_{s} \in [0, T_{c})} \Bigg\{\bigg| \Big(\CXvecEpnPhiL\Big)_{u, v} - \Big(\CXvecEpPhiL\Big)_{u, v} \bigg|\Bigg\} = 0,
    \end{equation*}
    for $0 \leq u, v \leq l - 1$, which corresponds to the definition of uniform convergence \cite[Def.~4.4.3]{trench2022}. It is concluded that $\CXvecEpnPhiL$ uniformly converges to $\CXvecEpPhiL$ as $n \to \infty$ over $\phi_{s} \in [0, T_{c})$ elementwisely, which proves the lemma.
\end{proof}

For $\XvecEpnPhiL$, let  $\{\mr{X}_{\epsilon_{n}}^{\phi_s}[i]\}_{i = 0}^{l - 1}\equiv \rXvecEpnPhiL$ denote the corresponding block of reconstruction symbols, and define two optimal pairs of initial sampling phase and conditional \gls{pdf} as
\begin{subequations}
\label{eqn:OptInf}
    \begin{gather}
        \bigg(\phi_{s, \epsilon_{n}, l}^{\opt}, p\Big(\rXvecEpnPhiLop | \XvecEpnPhiLop\Big)\bigg) \triangleq \label{eqn: op_pair_mut_inf_ep_n} \argmin_{\substack{\Big(\phi_{s} \in [0, T_{c}), p\big(\mr{\mvec{X}}_{\eps_n,\phi_s}^{l}| \mvec{X}_{\eps_n,\phi_s}^{l}\big)\Big): \\ \mE\Big\{\maa{d}_{se}\big(\XvecEpnPhiL, \mr{\mvec{X}}_{\eps_n,\phi_s}^{l}\big)\Big\} \leq D}} \frac{1}{l} I\Big(\XvecEpnPhiL; \mr{\mvec{X}}_{\eps_n,\phi_s}^{l}\Big), \\
        \bigg(\!\!\phi_{s, \epsilon, l}^{\opt}, p\Big(\rXvecEpPhiLop \!| \XvecEpPhiLop\Big)\!\!\bigg)\!\! \triangleq \!\!\!\!\!\!\!\!\label{eqn: op_pair_mut_inf_ep} \argmin_{\substack{\Big(\phi_{s} \in [0, T_{c}), p\big(\mr{\mvec{X}}_{\eps,\phi_s}^{l}| \mvec{X}_{\eps,\phi_s}^{l}\big)\Big) : \\ \mE\Big\{\maa{d}_{se}\big(\XvecEpPhil, \mr{\mvec{X}}_{\eps,\phi_s}^{l}\big) \Big\} \leq D}} \!\!\!\!\frac{1}{l} I\Big(\!\!\XvecEpPhil\!; \!\mr{\mvec{X}}_{\eps,\phi_s}^{l}\!\!\Big).
    \end{gather}
\end{subequations}
Next we define the set $\CsetPhiSvecL$ as
\begin{equation}
\label{eqn: fea_set_phi_dis}
    \CsetPhiSvecL \triangleq \bigg\{\phi_{s} \in [0, T_{c}), \macrmat{\mvec{S}^{l}} \in \mR^{l \times l} \Big| \frac{1}{l} \tr\big(\macrmat{\mvec{S}^{l}}\big) \leq D, \macrmat{\mvec{S}^{l}} \succ 0, \macrmat{\mvec{S}^{l}} = \big(\macrmat{\mvec{S}^{l}}\big)^{T}\bigg\},
\end{equation}
and state a second auxiliary lemma. 
\begin{lemma}
\label{lem: op_mut_inf_converg}
    Consider $\bigg(\phi_{s, \epsilon_{n}, l}^{\opt}, p\Big(\rXvecEpnPhiLop | \XvecEpnPhiLop\Big)\bigg)$ and $\bigg(\phi_{s, \epsilon, l}^{\opt}, p\Big(\rXvecEpPhiLop | \XvecEpPhiLop\Big)\bigg)$ defined in Eqn.~\eqref{eqn:OptInf}. Then, as $n \to \infty$ the sequence of infimums of the objective function in Eqn.~\eqref{eqn: op_pair_mut_inf_ep_n} converges to the infimum of the objective function in Eqn.~\eqref{eqn: op_pair_mut_inf_ep} , i.e.,
    \begin{equation}
    \label{eqn: converg_mut_inf_ep_ep_n}
        \lim_{n \to \infty} \frac{1}{l} I\Big(\XvecEpnPhiLop; \rXvecEpnPhiLop\Big) = \frac{1}{l} I\Big(\XvecEpOpPhiL; \rXvecEpOpPhiLOp\Big).
    \end{equation}
\end{lemma}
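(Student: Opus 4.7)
The plan is to reduce each minimization in Eqn.~\eqref{eqn:OptInf} to a finite-dimensional optimization depending on $n$ only through the source autocorrelation matrix, and then to transfer the uniform convergence of these matrices from Lemma~\ref{lem: convge_autocorr_mats_n} into uniform convergence of the objective, from which convergence of the minima follows.

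First I would invoke the standard Gaussian RDF characterization. Since $\XvecEpnPhiL$ is zero-mean jointly Gaussian with autocorrelation matrix $\CXvecEpnPhiL$ (and analogously for $\XvecEpPhil$), and the distortion is MSE, the minimizing conditional pdf in Eqn.~\eqref{eqn: op_pair_mut_inf_ep_n} is itself jointly Gaussian and can be realized by a backward additive test channel, in which the reconstruction is statistically independent of the reconstruction error $\mvec{S}^l \sim \mGd(\zerovec^l, \macrmat{\mvec{S}^l})$. Under this parameterization the feasible set is exactly $\CsetPhiSvecL$ and the normalized mutual information admits the closed form
\begin{equation*}
    \frac{1}{l} I\bigl(\XvecEpnPhiL; \rXvecEpnPhiL\bigr) = \frac{1}{2l}\log\frac{\det\bigl(\CXvecEpnPhiL\bigr)}{\det\bigl(\macrmat{\mvec{S}^l}\bigr)},
\end{equation*}
whenever the implicit spectral constraint $\macrmat{\mvec{S}^l} \preceq \CXvecEpnPhiL$ is satisfied; any $\macrmat{\mvec{S}^l}$ violating it can be replaced by its water-filled truncation against the source spectrum, which remains in $\CsetPhiSvecL$ and yields a no-larger mutual information. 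The analogous identity holds for the $\epsilon$-indexed quantities.

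Next I would combine Lemma~\ref{lem: convge_autocorr_mats_n} with the stability condition Eqn.~\eqref{eqn: sdd}. Because $T_s(\epsilon_n)$ exceeds $T_c/(p+1)$ and the sampled process has finite correlation length $\tau_c$, each row of $\CXvecEpnPhiL$ has at most $2\tau_c$ nonzero off-diagonal entries, all bounded in magnitude by $\max_{|\lambda|>T_c/(p+1)}|c_{X_c}(t,\lambda)|$. Hence Eqn.~\eqref{eqn: sdd} yields strict diagonal dominance of $\CXvecEpnPhiL$ with a uniform margin $\gamma_c > 0$, placing all source covariance matrices (and their limits) in a compact subset of the positive-definite cone on which $\mmat{C}\mapsto\log\det(\mmat{C})$ is Lipschitz. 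The elementwise uniform convergence from Lemma~\ref{lem: convge_autocorr_mats_n} therefore upgrades to
\begin{equation*}
    \lim_{n\to\infty}\sup_{\phi_s \in [0, T_c)} \bigl|\log\det(\CXvecEpnPhiL) - \log\det(\CXvecEpPhiL)\bigr| = 0.
\end{equation*}

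Finally, since the $\macrmat{\mvec{S}^l}$-dependent term $\frac{1}{2l}\log\det(\macrmat{\mvec{S}^l})$ of the objective does not depend on $n$, the difference of the two objectives at any common point $(\phi_s, \macrmat{\mvec{S}^l}) \in \CsetPhiSvecL$ reduces to $\frac{1}{2l}[\log\det(\CXvecEpnPhiL)-\log\det(\CXvecEpPhiL)]$, which vanishes uniformly over the feasible set by the display above. The elementary principle that uniform convergence of functions implies convergence of their infima then yields Eqn.~\eqref{eqn: converg_mut_inf_ep_ep_n}. I expect the main obstacle to be the first step: rigorously justifying the Gaussian reduction together with the implicit spectral constraint $\macrmat{\mvec{S}^l}\preceq\CXvecEpnPhiL$ via the water-filling truncation; once this is settled, the remainder reduces to a continuity calculation on a finite-dimensional problem.
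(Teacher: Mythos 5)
Your core strategy coincides with the paper's: reduce both minimizations to $\log\det$-type objectives over the set $\CsetPhiSvecL$, establish uniform (over $\phi_s$) convergence of $\frac{1}{2l}\log\det\big(\CXvecEpnPhiL\big)$ to $\frac{1}{2l}\log\det\big(\CXvecEpPhiL\big)$ using the two-sided eigenvalue bounds coming from $|c_{X_c}|\le\gamma$ and the strict diagonal-dominance margin $\gamma_c$ of Eqn.~\eqref{eqn: sdd}, and then transfer uniform convergence of the objectives to convergence of the infima. Where the paper uses a first-order Taylor expansion of $\log$ with Lagrange remainder, you invoke Lipschitz continuity of $\log\det$ on the compact spectral band $[\gamma_c,\,l\gamma]$ --- these are equivalent, and your version is cleaner; likewise your appeal to the elementary fact $\big|\inf f_n-\inf f\big|\le\sup|f_n-f|$ is a leaner substitute for the cited Kanniappan theorem, since convexity and the locally-convex-space machinery are not actually needed for this step.

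The one genuine gap is your removal of the ramp function. You replace the objective $\frac{1}{2l}\big(\log\frac{\det(\CXvecEpnPhiL)}{\det(\macrmat{\mvec{S}^l})}\big)^+$ by the unramped form and try to enforce the implicit constraint $\macrmat{\mvec{S}^{l}}\preceq\CXvecEpnPhiL$ via a water-filled truncation. But that truncation makes the effective feasible region depend on $n$: a matrix $\macrmat{\mvec{S}^{l}}$ feasible under $\CXvecEpnPhiL$ need not be feasible under $\CXvecEpPhiL$ and vice versa, so ``the difference of the two objectives at any common point of $\CsetPhiSvecL$'' is not well-defined throughout $\CsetPhiSvecL$ for the unramped form, and the uniform-convergence-over-a-common-set argument no longer applies. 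The paper's device is precisely to keep the $(\cdot)^+$: the ramped objective is finite and convex on all of $\CsetPhiSvecL$, which is independent of $n$, and since $x\mapsto x^+$ is $1$-Lipschitz the bound $\big|g_n-g\big|\le\frac{1}{2l}\big|\log\det(\CXvecEpnPhiL)-\log\det(\CXvecEpPhiL)\big|$ holds pointwise on $\CsetPhiSvecL$. Replacing your unramped objective with the ramped one, and then applying your Lipschitz and infimum arguments verbatim, closes the gap without any need for the water-filling detour.
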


\begin{proof}
    Define $\SvecEpnPhiLop \triangleq \XvecEpnPhiLop - \rXvecEpnPhiLop$ and $\SvecEpPhiLop \triangleq \XvecEpOpPhiL - \rXvecEpOpPhiLOp$. Due to the statistical independence between $\SvecEpnPhiLop$ and $\rXvecEpnPhiLop$ and between $\SvecEpPhiLop$ and $\rXvecEpOpPhiLOp$, we obtain
    \begin{subequations}
    \label{eqns: stat_indep_diff_recons}
        \begin{gather}
            \macrmat{\SvecEpnPhiLop} = \macrmat{\XvecEpnPhiLop} - \macrmat{\rXvecEpnPhiLop}, \\
            \macrmat{\SvecEpPhiLop} = \macrmat{\XvecEpOpPhiL} - \macrmat{\rXvecEpOpPhiLOp}.
            \end{gather}
    \end{subequations}
    Define two optimal pairs of initial sampling phase and autocorrelation matrix $\bigg(\phi_{s, \epsilon_{n}, l}^{\opt}, \macrmat{\SvecEpnPhiLop}\bigg) \in \CsetPhiSvecL$ and $\bigg(\phi_{s, \epsilon, l}^{\opt}, \macrmat{\SvecEpPhiLop}\bigg) \in \CsetPhiSvecL$ as
    \begin{subequations}
        \begin{gather}
            \bigg(\phi_{s, \epsilon_{n}, l}^{\opt}, \macrmat{\SvecEpnPhiLop}\bigg)\!\! \triangleq \!\!\label{eqn: argmin_log_det_ep_n} \argmin_{\big(\phi_{s}, \macrmat{\mvec{S}^{l}}\big) \in \CsetPhiSvecL}\!\! \frac{1}{2l} \left(\!\log\left(\frac{\det\Big(\CXvecEpnPhiL\Big)}{\det\Big(\macrmat{\mvec{S}^{l}}\Big)}\right)\right)^{+}\!\!, \\
            \Big(\phi_{s, \epsilon, l}^{\opt}, \macrmat{\SvecEpPhiLop}\Big) \triangleq \label{eqn: argmin_log_det_ep} \argmin_{\big(\phi_{s}, \macrmat{\mvec{S}^{l}}\big) \in \CsetPhiSvecL} \frac{1}{2l} \left(\log\left(\frac{\det\Big(\CXvecEpPhiL\Big)}{\det\Big(\macrmat{\mvec{S}^{l}}\Big)}\right)\right)^{+}.
        \end{gather}
    \end{subequations}
    Then, following the arguments leading to \cite[Eqn.~(B.5a)]{tan2024_2}, we conclude that to prove \eqref{eqn: converg_mut_inf_ep_ep_n}, it is sufficient to show that the sequence of minimums of the objective function in Eqn.~\eqref{eqn: argmin_log_det_ep_n} over $(\phi_{s}, \macrmat{\mvec{S}^{l}}) \in \CsetPhiSvecL$ converges to the minimum of the objective function in Eqn.~\eqref{eqn: argmin_log_det_ep} over $(\phi_{s}, \macrmat{\mvec{S}^{l}}) \in \CsetPhiSvecL$ as $n \to \infty$, i.e.,
    \begin{equation}
    \label{eqn: log_det_op_converg}
        \lim_{n \to \infty} \frac{1}{2l} \log\left(\frac{\det\Bigg(\macrmat{\XvecEpnPhiLop}\Bigg)}{\det\Bigg(\macrmat{\SvecEpnPhiLop}\Bigg)}\right)^{+} = \frac{1}{2l} \log\left(\frac{\det\Bigg(\macrmat{\XvecEpOpPhiL}\Bigg)}{\det\Bigg(\macrmat{\SvecEpPhiLop}\Bigg)}\right)^{+}.
    \end{equation}
    
    In the proof of Eqn.~\eqref{eqn: log_det_op_converg}, we apply \cite[Thm.~2.1]{kanniappan1983}\footnote{\label{footnote: convg_thm} Let $\mset{X}$ and $\mset{Y}$ be two locally convex spaces, where $\mset{Y}$ is also an ordered vector space with a normal order cone. Let $f_{n}: \mset{X} \mapsto \mset{Y}$, $n \in \mNplus$, and $f: \mset{X} \mapsto \mset{Y}$ be continuous and convex mappings. Define $\alpha_{n} \triangleq \inf_{x \in \mset{X}} f_{n}(x)$, $n \in \mNplus$, and $\alpha \triangleq \inf_{x \in \mset{X}} f(x)$. If $\ulim_{n \to \infty} f_{n} = f$, then $\lim_{n \to \infty} \alpha_{n} = \alpha$. See \cite[Sec.~2]{kanniappan1983}. In this scenario, $\mset{X}$ corresponds to $\CsetPhiSvecL$, which is the union of an interval $[0, T_c)$ and the set of real symmetric positive definite matrices satisfying a given trace constraint. $\CsetPhiSvecL$ is a convex space. $\mset{Y}$ corresponds to $\mR$, whose positive cone corresponds to $\{0\}\cup\mRdplus$, which is normal \cite[Example~6.3.5]{pavel2013}, \cite[Footnote~7]{dabora2023}.} and its application requires to prove
    \begin{equation}
    \label{eqn: ramp_func_unif_convg}
        \ulim_{n \to \infty} \frac{1}{2l} \left(\log\left(\frac{\det\Big(\CXvecEpnPhiL\Big)}{\det\big(\macrmat{\mvec{S}^{l}}\big)}\right)\right)^{+} = \frac{1}{2l} \left(\log\left(\frac{\det\Big(\CXvecEpPhiL\Big)}{\det\big(\macrmat{\mvec{S}^{l}}\big)}\right)\right)^{+},
    \end{equation}
    over $\big(\phi_{s}, \macrmat{\mvec{S}^{l}}\big) \in \CsetPhiSvecL$.
        
    Thus, we finally arrive at the uniform convergence condition as follows:
    \begin{equation}
    \label{eqn: logdet_unif_converg}
        \ulim_{n \to \infty} \frac{1}{2l} \log\det\Big(\CXvecEpnPhiL\Big) = \frac{1}{2l} \log\det\Big(\CXvecEpPhiL\Big),
    \end{equation}
    over $\phi_{s} \in [0, T_{c})$.
    
    To prove Eqn.~\eqref{eqn: logdet_unif_converg}, we first show the boundedness of the elements of $\CXvecEpnPhiL$ and of $\CXvecEpPhiL$ and the boundedness of their eigenvalues. The elements of $\CXvecEpnPhiL$ can be upper bounded as follows:
        \begin{align*}
            \bigg| \Big(\CXvecEpnPhiL\Big)_{u, v} \bigg| & = \Big| \mE\big\{\XEpnPhi[u] \cdot \XEpnPhi[v]\big\} \Big| \\
            & \labelrel \leq{step: cauchy_schwarz_ineqn} \sqrt{\mE\Big\{\big(\XEpnPhi[u]\big)^{2}\Big\} \cdot \mE\Big\{\big(\XEpnPhi[v]\big)^{2}\Big\}} \\
            & \labelrel \leq{step: af_bounded_magnitude} \sqrt{\gamma \cdot \gamma} = \gamma,
        \end{align*}
    for $0 \leq u, v \leq l - 1$, where \eqref{step: cauchy_schwarz_ineqn} follows from the Cauchy–Schwarz inequality \cite[Thm.~F.1]{shynk2012} and \eqref{step: af_bounded_magnitude} follows from the boundedness of the \gls{af} of the \gls{ct} \gls{wscs} source process $c_{X_{c}}(t, \lambda)$ for $t, \lambda \in \mR$ (see Section~\ref{sec: prob_form_mod}). As the upper bound $\gamma \in \mRdplus$ is independent of $\epsilon_{n}$,  the elements of $\CXvecEpPhiL$ are similarly upper bounded by $\gamma$.

    Since both $\XvecEpnPhiL$ and $\XvecEpnPhiL$ are sampled from \gls{ct} random processes, their autocorrelation matrices $\CXvecEpnPhiL$ and $\CXvecEpPhiL$ are positive definite and their eigenvalues are all positive \cite[Comment A.1]{dabora2023}. Let the eigenvalues of $\CXvecEpnPhiL$ and of $\CXvecEpPhiL$ arranged in descending order be denoted by $\lambda_{i}^{l}\Big\{\CXvecEpnPhiL\Big\}$ and $\lambda_{i}^{l}\Big\{\CXvecEpPhiL\Big\}$, respectively, $0 \leq i \leq l - 1$.   The maximal eigenvalue of $\CXvecEpnPhiL$ can be upper bounded as follows:
        \begin{equation}
            \maxEig\Big\{\CXvecEpnPhiL\Big\} \leq \tr\Big\{\CXvecEpnPhiL\Big\} \leq \sum_{i=0}^l\lambda_i^l\{\CXvecEpnPhiL\Big\} \labelrel ={step: autocorr_func_boundedness} l \cdot \gamma,
        \end{equation}
    where \eqref{step: autocorr_func_boundedness} follows from the boundedness of the function  $c_{X_{c}}(t, \lambda)$ over $t, \lambda \in \mR$ (see Section~\ref{sec: prob_form_mod}). As the upper bound $l \cdot \gamma$ is independent of $\epsilon_{n}$, similarly the maximal eigenvalue of $\CXvecEpPhiL$ is also upper bounded by $l \cdot \gamma$.
    Now, considering the boundedness of the elements of $\CXvecEpnPhiL$ and of $\CXvecEpPhiL$, the boundedness of their eigenvalues and the fact that $\CXvecEpnPhiL$ uniformly converges to $\CXvecEpPhiL$ as $n \to \infty$ over $\phi_{s} \in [0, T_{c})$ elementwise (as proved in Lemma~\ref{lem: convge_autocorr_mats_n}), then,  by \cite[Thm.~2.4.9.2]{horn2012}\footnote{Let an infinite sequence of $l \times l$ square matrices $\mmat{A}_{n}$, $n \in \mNplus$, be given and suppose $\lim_{n \to \infty} \mmat{A}_{n} = \mmat{A}$ in the elementwise sense. Let $\lambda(\mmat{A}_{n}) = [\lambda_{0}(\mmat{A}_{n}) \ldots \lambda_{l-1}(\mmat{A}_{n})]^{T}$, $n\in\mN$, and $\lambda(\mmat{A}) = [\lambda_{0}(\mmat{A}) \ldots \lambda_{l - 1}(\mmat{A})]^{T}$ be given presentations of the eigenvalues of $\mmat{A}_{n}$, $n \in \mNplus$, and $\mmat{A}$, respectively. Denote the set of all permutations of $\{0, 1, \ldots, l - 1\}$ by $\mset{S}_{l}$. Then, for any $\epsilon > 0$, there exists an associated $N_{\epsilon} \in \mNplus$, such that for all $n \geq N_{\epsilon}$, we have $\min_{\pi \in \mset{S}_{l}} \max_{i = 0, \ldots, l - 1} \{|\lambda_{\pi(i)}\{\mmat{A}_{n}\} - \lambda_{i}\{\mmat{A}\}|\} \leq \epsilon$.} it can be obtained that as  $n \to \infty$,  $\lambda_{i}^{l}\Big\{\CXvecEpnPhiL\Big\}$ convergence to $\lambda_{i}^{l}\Big\{\CXvecEpPhiL\Big\}$  uniform over $\phi_{s} \in [0, T_{c})$ for $0 \leq i \leq l - 1$, i.e., for any $\delta \in \mRdplus$, there exists an associated number $n_{\delta} \in \mNplus$, s.t. for any $n \geq n_{\delta}$, we have
    \begin{equation}
    \label{eqn: unif_converg_eigenvals}
        \bigg|\lambda_{i}^{l}\Big\{\CXvecEpnPhiL\Big\} - \lambda_{i}^{l}\Big\{\CXvecEpPhiL\Big\}\bigg| \leq \delta,
    \end{equation}
    over $\phi_{s} \in [0, T_{c})$ for $0 \leq i \leq l - 1$.

    Next, consider the distance between $\frac{1}{2l} \log\det\Big(\CXvecEpnPhiL\Big)$ and $\frac{1}{2l} \log\det\Big(\CXvecEpPhiL\Big)$: As the determinant of a square matrix equals to the product of its eigenvalues \cite[Proposition~5.2]{ford2014}, we have
        \begin{align}
            & \nonumber \bigg| \frac{1}{2l} \log\det\Big(\CXvecEpnPhiL\Big) - \frac{1}{2l} \log\det\Big(\CXvecEpPhiL\Big) \bigg| \\
            & = \label{eqn: last_exp_unif_converg} \frac{1}{2l} \left| \sum_{i = 0}^{l - 1} \Bigg( \log\bigg(\lambda_{i}^{l}\Big\{\CXvecEpnPhiL\Big\}\bigg) - \log\bigg(\lambda_{i}^{l}\Big\{\CXvecEpnPhiL\Big\}\bigg)\Bigg) \right|.
        \end{align}

    As the logarithmic function is twice differentiable over positive arguments, we use the first-order Taylor series with the remainder of Lagrange form (see \cite[Sec.~20.3]{kline1998}) to express $\log\bigg(\lambda_{i}^{l}\Big\{\CXvecEpnPhiL\Big\}\bigg)$ as
    \begin{align}
    \label{eqn: taylor_expansion}
        & \log\bigg(\lambda_{i}^{l}\Big\{\CXvecEpnPhiL\Big\}\bigg) = \nonumber \log\bigg(\lambda_{i}^{l}\Big\{\CXvecEpPhiL\Big\}\bigg) + \frac{1}{\ln 2} \cdot \frac{1}{\lambda_{i}^{l}\Big\{\CXvecEpPhiL\Big\}} \cdot \bigg(\lambda_{i}^{l}\Big\{\CXvecEpnPhiL\Big\} - \lambda_{i}^{l}\Big\{\CXvecEpPhiL\Big\}\bigg) \\
        & \qquad\qquad\qquad\qquad\qquad + R_{1}\bigg(\lambda_{i}^{l}\Big\{\CXvecEpnPhiL\Big\}\bigg),
    \end{align}
    for $0 \leq i \leq l - 1$, where $R_{1}\bigg(\lambda_{i}^{l}\Big\{\CXvecEpnPhiL\Big\}\bigg)$, the remainder of Lagrange form, is given as (see \cite[Sec.~20.3]{kline1998})
    \begin{equation}
    \label{eqn: lagrange_remainder}
        R_{1}\bigg(\lambda_{i}^{l}\Big\{\CXvecEpnPhiL\Big\}\bigg) = -\frac{1}{2 \cdot \ln 2} \cdot \frac{1}{(\xi_{i})^{2}} \cdot \bigg(\lambda_{i}^{l}\Big\{\CXvecEpnPhiL\Big\} - \lambda_{i}^{l}\Big\{\CXvecEpPhiL\Big\}\bigg)^{2},
    \end{equation}
    in which $\xi_{i}$ satisfies
    \begin{equation}
    \label{eqn: xi_value}
        \min\bigg\{\lambda_{i}^{l}\Big\{\CXvecEpnPhiL\Big\}, \lambda_{i}^{l}\Big\{\CXvecEpPhiL\Big\}\bigg\} \leq \xi_{i} \leq \max\bigg\{\lambda_{i}^{l}\Big\{\CXvecEpnPhiL\Big\}, \lambda_{i}^{l}\Big\{\CXvecEpPhiL\Big\}\bigg\}.
    \end{equation}
    As $\CXvecEpnPhiL$ is a \gls{sdd} matrix (recall Eqn.~\eqref{eqn: sdd} and \cite[Comment~4]{dabora2023}), we can lower bound the minimal eigenvalue of $\CXvecEpnPhiL$ as follows:
    \begin{subequations}
        \begin{align}
            & \minEig\Big\{\CXvecEpnPhiL\Big\} \\
            & \nonumber \labelrel={step: inver_mat_extre_eigvals} \bigg(\maxEig\Big\{\big(\CXvecEpnPhiL\big)^{-1}\Big\}\bigg)^{-1} \\
            & \label{eqn: inv_1_norm_inv_autocorr_mat} \labelrel\geq{step: 1_norm} \bigg(\Big\Vert\big(\CXvecEpnPhiL\big)^{-1}\Big\Vert_{1}\bigg)^{-1} \\
            & \nonumber \labelrel={step: inf_norm} \bigg(\Big\Vert\big(\CXvecEpnPhiL\big)^{-1}\Big\Vert_{\infty}\bigg)^{-1} \\
            & \nonumber \labelrel\geq{step: reason_sdd} \min_{0 \leq u \leq l - 1} \Bigg\{\bigg|\Big(\CXvecEpnPhiL\Big)_{u, v}\bigg| - \sum_{v = 0, v \neq u}^{l - 1}\bigg|\Big(\CXvecEpnPhiL\Big)_{u, v}\bigg|\Bigg\} \\
            & \labelrel\geq{step: sdd_derivation} \min_{0 \leq t < T_{c}} \Bigg\{c_{X_{c}}(t, 0) - 2\tau_{c} \cdot \max_{|\lambda| > \frac{T_{c}}{p + 1}} \big\{|c_{X_{c}}(t, \lambda)|\big\}\Bigg\} \label{eqn: low_boud_min_eig} \labelrel\geq{step: sdd_low_bound} \gamma_{c},
        \end{align}
    \end{subequations}
    where \eqref{step: inver_mat_extre_eigvals} follows from \cite[Thm.~EIM]{beezer2012}; \eqref{step: 1_norm} follows from the symmetry of $(\CXvecEpnPhiL)^{-1}$ and \cite[Eqn.~(4)]{dembo1988}; \eqref{step: inf_norm} follows from the symmetry of $(\CXvecEpnPhiL)^{-1}$; \eqref{step: reason_sdd} follows from \cite[Eqn.~(3)]{moraca2008} and as $\CXvecEpnPhiL$ is a \gls{sdd} matrix; lastly, \eqref{step: sdd_derivation} and \eqref{step: sdd_low_bound} follow from Eqn.~\eqref{eqn: sdd}. As the lower bound $\gamma_{c}$ is independent of $\epsilon_{n}$, similarly we obtain $\minEig\Big\{\CXvecEpPhiL\Big\} \geq \gamma_{c}$. Therefore, $\xi_{i}$ is lower bounded by $\gamma_{c}$, for $0 \leq i \leq l - 1$.

    Plugging Eqn.~\eqref{eqn: taylor_expansion} into  Eqn.~\eqref{eqn: last_exp_unif_converg} and applying  Eqn.~\eqref{eqn: unif_converg_eigenvals}, we obtain that $\forall n \geq n_{\delta}$ and $\forall \phi_{s} \in [0, T_{c})$, it holds that
    \begin{align*}
        & \bigg| \frac{1}{2l} \log\det\Big(\CXvecEpnPhiL\Big) - \frac{1}{2l} \log\det\Big(\CXvecEpPhiL\Big) \bigg| \\
        & \labelrel={step: taylor_expansion} \frac{1}{2l} \Bigg| \sum_{i = 0}^{l - 1} \Bigg(\frac{1}{\ln 2} \cdot \frac{1}{\lambda_{i}^{l}\Big\{\CXvecEpPhiL\Big\}} \cdot \bigg(\lambda_{i}^{l}\Big\{\CXvecEpnPhiL\Big\} - \lambda_{i}^{l}\Big\{\CXvecEpPhiL\Big\}\bigg) + R_{1}\bigg(\lambda_{i}^{l}\Big\{\CXvecEpnPhiL\Big\}\bigg)\Bigg)\Bigg| \\
        & \labelrel\leq{step: ln2_eigenval} \frac{1}{l \cdot \gamma_{c}} \sum_{i = 0}^{l - 1} \bigg|\lambda_{i}^{l}\Big\{\CXvecEpnPhiL\Big\} - \lambda_{i}^{l}\Big\{\CXvecEpPhiL\Big\}\bigg|+ \frac{1}{2l} \sum_{i = 0}^{l - 1} \Bigg| \frac{1}{(\xi_{i})^{2}} \cdot \bigg(\lambda_{i}^{l}\Big\{\CXvecEpnPhiL\Big\} - \lambda_{i}^{l}\Big\{\CXvecEpPhiL\Big\}\bigg)^{2}\Bigg| \\
        & \labelrel\leq{step: unif_converg_delta} \frac{\delta}{\gamma_{c}}\Bigg(1 + \frac{\delta}{2\gamma_{c}}\Bigg),
    \end{align*}
    where \eqref{step: taylor_expansion} follows from Eqns.~\eqref{eqn: last_exp_unif_converg} and \eqref{eqn: taylor_expansion}; \eqref{step: ln2_eigenval} follows from $\frac{1}{\ln 2} < 2$ and $\minEig\Big\{\CXvecEpPhiL\Big\} \geq \gamma_{c}$; and \eqref{step: unif_converg_delta} follows from Eqn.~\eqref{eqn: unif_converg_eigenvals} and $\xi_{i} \geq \gamma_{c}$, for $0 \leq i \leq l - 1$. 
    
    We therefore conclude that $\frac{1}{2l} \log\det\Big(\CXvecEpnPhiL\Big)$ uniformly converges to $\frac{1}{2l} \log\det\Big(\CXvecEpPhiL\Big)$ as $n \to \infty$ over $\phi_{s} \in [0, T_{c})$, which corresponds to Eqn.~\eqref{eqn: logdet_unif_converg}. This facilitates the application of \cite[Thm.~2.1]{kanniappan1983} to conclude Eqn.~\eqref{eqn: converg_mut_inf_ep_ep_n}, which completes the proof of Lemma~\ref{lem: op_mut_inf_converg}.
\end{proof}

We can now state the lemma, which establishes Thm.~\ref{thm: main_thm}, as follows:

\begin{lemma}
\label{lem: final_lem_2nd_scheme}
    For the source sequence generation scheme described in Section~\ref{sec: res}, it holds that
    \begin{equation}
    \label{eqn: rdf_epsilon_n_converg}
        R_{\epsilon}(D) = \limsup_{n \to \infty} R_{\epsilon_{n}}(D),
    \end{equation}
    where $R_{\epsilon_{n}}(D)$ is given in Eqn.~\eqref{eqn:WSCS_RDF_opt}.
\end{lemma}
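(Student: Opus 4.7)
The plan is to prove both directions of the identity by reducing everything to the information-spectrum characterization of Thm.~\ref{thm: rdf_arb_dt_proc}, using Lemma~\ref{lem:Unif_Integ} to justify its applicability. First, I would verify the hypotheses of Thm.~\ref{thm: rdf_arb_dt_proc}: the boundedness $|c_{X_c}(t,\lambda)|\le\gamma$ implies that $\Var\{\Xepphi[i]\}\le\gamma$ uniformly in $i$, so Lemma~\ref{lem:Unif_Integ} yields uniform integrability of $\{\maa{d}_{se}(\XvecEpPhil,\zerovec^l)\}_{l\ge 1}$. Hence $R_\epsilon(D)$ equals the infimum of $\ssmir(\mvec{X}_{\epsilon,\phi_s}^\infty,\bar{\mvec{X}}_{\epsilon,\phi_s}^\infty)$ over joint distributions satisfying $\limsup_l \mE\{\maa{d}_{se}\}\le D$, where the initial phase $\phi_s$ is a design parameter because of the allowed delay.

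For \emph{achievability} I would construct a block coding scheme of block length $l$, exploiting the permitted delay of $\tau_c T_s(\epsilon)+T_c$ in two ways: (i) a gap of $\tau_c$ samples between consecutive blocks renders them uncorrelated, hence \emph{statistically independent} (jointly Gaussian), because $c_{X_c}(t,\lambda)=0$ for $|\lambda|>\lambda_c$ and $\tau_c T_s(\epsilon)\ge\lambda_c$; (ii) the additional budget up to $T_c$ lets the sampler reset the initial phase of every block to the \emph{optimum} $\phi_{s,\epsilon_n,l}^{\opt}$ of Eqn.~\eqref{eqn: op_pair_mut_inf_ep_n}. On each block I would apply the synchronous-sampling Gaussian code of \cite{kipnis2018} at rate $R_{\epsilon_n}^{\phi_{s,\epsilon_n,l}^{\opt}}(D)+o(1)$, using the $\epsilon_n$-test channel of Eqn.~\eqref{eqn: op_pair_mut_inf_ep_n} as the reconstruction kernel applied to the actual $\epsilon$-samples. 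Since the resulting joint process is an IID concatenation of length-$l$ Gaussian blocks, the normalized log-likelihood ratio inside the $\ssmir$ definition is a sum of IID contributions; by the WLLN its limit superior in probability coincides with its expectation, equal to $\frac{1}{l}I(\XvecEpOpPhiL;\rXvecEpOpPhiLOp)$. Lemma~\ref{lem: op_mut_inf_converg} then shows this quantity equals $\lim_{n\to\infty}\frac{1}{l}I(\XvecEpnPhiLop;\rXvecEpnPhiLop)$, which in turn equals $\lim_{n\to\infty} R_{\epsilon_n}^{\phi_{s,\epsilon_n,l}^{\opt}}(D)=\lim_{n\to\infty} R_{\epsilon_n}(D)$ after taking $l\to\infty$ in Kipnis' characterization. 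A distortion-overhead argument (controlled via the uniform elementwise convergence of Lemma~\ref{lem: convge_autocorr_mats_n}) ensures the average MSE stays $\le D+o(1)$. Taking the $\limsup$ over $n$ along an appropriate subsequence gives $R_\epsilon(D)\le \limsup_{n\to\infty} R_{\epsilon_n}(D)$.

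For the \emph{converse} I would start from any admissible scheme for the $\epsilon$-process and use the lower bound $\ssmir \ge \liminf_{l\to\infty}\frac{1}{l}I(\XvecEpPhil;\rXvecEpPhil)$, combined with the fact that for Gaussian sources with MSE constraint the infimum of normalized MI over feasible channels is the log-det expression of Eqn.~\eqref{eqn: argmin_log_det_ep}. Minimizing also over the initial phase $\phi_s$ yields the $l$-length analogue of $R_\epsilon(D)$. Applying Lemma~\ref{lem: op_mut_inf_converg} in the reverse direction, for any $\delta>0$ and any $n$ along a subsequence realizing $\limsup_n R_{\epsilon_n}(D)$, the $\epsilon$-log-det is within $\delta$ of its $\epsilon_n$-counterpart, which equals $R_{\epsilon_n}^{\phi_{s,\epsilon_n,l}^{\opt}}(D)$ up to an $l\to\infty$ limit. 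This yields $R_\epsilon(D)\ge \limsup_{n\to\infty} R_{\epsilon_n}(D)-\delta$; letting $\delta\downarrow 0$ closes the argument.

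The principal obstacle I anticipate is the double-limit exchange: the $\ssmir$ is itself a $\limsup_{l\to\infty}$ in probability, whereas $R_{\epsilon_n}(D)$ arises from $l\to\infty$ inside the Kipnis formula, and Lemma~\ref{lem: op_mut_inf_converg} operates at fixed $l$ under $n\to\infty$. Handling this will require either a diagonal $(n_k,l_k)$ sequence chosen so that the uniform convergence rates of Lemma~\ref{lem: convge_autocorr_mats_n} dominate the $1/l$ fluctuations, or a two-stage argument in which the block-IID structure forces the $\ssmir$ to reduce to the block-average MI via the WLLN, after which the $n\to\infty$ and $l\to\infty$ limits commute safely. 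The SDD condition~\eqref{eqn: sdd} and the lower eigenvalue bound $\gamma_c$ derived from it are essential here, since they prevent the $\log\det$ terms from blowing up as $n,l\to\infty$ and make the convergence in Lemma~\ref{lem: op_mut_inf_converg} controlled uniformly over the feasible set $\CsetPhiSvecL$.
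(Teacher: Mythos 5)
Your high-level plan matches the paper's strategy closely: use Thm.~\ref{thm: rdf_arb_dt_proc} via Lemma~\ref{lem:Unif_Integ}, construct an $l$-block scheme with guard intervals providing both independence and phase synchronization, argue concentration of the mutual-information density rate (you invoke the WLLN, the paper uses Chebyshev's inequality with a variance bound), and bridge the two sampling rates through Lemma~\ref{lem: op_mut_inf_converg}. Your converse sketch and your recognition of the double-limit exchange as the main obstacle are both on target.

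However, there is a genuine gap in your achievability argument that your ``distortion-overhead argument'' does not repair, because it concerns the \emph{rate}, not the distortion. To upper-bound $\ssmir$ by $\limsup_n R_{\epsilon_n}(D)$, one must at some point pass from the mutual information over the full $k(l+\tau_c)$-sample interval --- which is what the converse side of Kipnis' result actually controls, since $R_{\epsilon_n}(D)$ is a lower bound on $\tfrac{1}{k(l+\tau_c)}I\big(\XvecEpnPhiGIkOp;\rXvecEpnPhiGIkOp\big)$ --- down to the mutual information carried only by the $k\cdot l$ samples that are actually reconstructed. The decoder represents the $\tau_c$ guard samples per block as zeros, and the resulting discrepancy shows up as the differential-entropy gap $h(\SvecTwo\mid\SvecOne)-h(\XvecTwo\mid\XvecOne)$ in the decomposition of the full-block mutual information. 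If this gap is uncontrolled, the per-sample mutual information of the $l$-blocks can exceed $R_{\epsilon_n}(D)$ and the argument collapses. The paper spends a substantial block of inequalities (from Eqn.~\eqref{eqn: mut_inf_with_guard_intvls} through Eqn.~\eqref{eqn: diff_entrop_2_1_up_bound}) showing $h(\SvecTwo\mid\SvecOne)-h(\XvecTwo\mid\XvecOne)\le k\tau_c\tldgamma$, invoking Hadamard's inequality, the Schur-complement/inverse-block identity \cite[Eqn.~(0.7.3.1)]{horn2012}, permutation-matrix norm preservation, and --- critically --- the SDD hypothesis~\eqref{eqn: sdd} which lower-bounds the minimal eigenvalue of $\CXvecEpnPhiL$ by $\gamma_c$. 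You correctly flag the SDD condition and $\gamma_c$ as ``essential,'' but tie them only to Lemma~\ref{lem: op_mut_inf_converg}; they are equally essential for taming this entropy gap, and without that chain of inequalities the $l\to\infty$ limit of the overhead does not vanish. Relatedly, your statement that $\tfrac{1}{l}I\big(\XvecEpnPhiLop;\rXvecEpnPhiLop\big)$ ``equals $\lim_{n\to\infty}R_{\epsilon_n}^{\phi_{s,\epsilon_n,l}^{\opt}}(D)$ after taking $l\to\infty$'' conflates a fixed-$l$ minimal MI with an $l\to\infty$ RDF; the paper instead invokes the asymptotic achievability definition (its Eqn.~\eqref{eqn: asymp_achi_rdp}) to get a one-sided bound at finite $k$, then carefully sequences the $\delta$'s, $l$, $n$, $k$ so the inequalities compose. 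Your proposal would need this bookkeeping and, above all, the entropy-gap bound to close.
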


\begin{proof}
    We first prove the converse part of the lemma as follows:
        \begin{align}
            R_{\epsilon}(D) & \labelrel\geq{step: rdf_geq_optim_mut_inf} \limsup_{l \to \infty} \inf_{\substack{\big(\phi_{s} \in [0, T_{c}), p(\mr{\mvec{X}}^{l} | \XvecEpPhil)\big): \\ \mE\Big\{\maa{d}_{se}\big(\XvecEpPhil, \mr{\mvec{X}}^{l}\big) \Big\}\leq D}} \frac{1}{l} I\big(\XvecEpPhil; \mr{\mvec{X}}^{l}\big) \nonumber \\
            & \nonumber \labelrel={step: min_mut_inf_ep} \limsup_{l \to \infty} \frac{1}{l} I\Big(\XvecEpOpPhiL; \rXvecEpOpPhiLOp\Big) \\
            & \nonumber \labelrel={step: ep_ep_n_equ} \limsup_{l \to \infty} \lim_{n \to \infty} \frac{1}{l} I\Big(\XvecEpnPhiLop; \rXvecEpnPhiLop\Big) \\
            & \nonumber \labelrel={step: lim_exists} \limsup_{l \to \infty} \limsup_{n \to \infty} \frac{1}{l} I\Big(\XvecEpnPhiLop; \rXvecEpnPhiLop\Big) \\
            & \nonumber \labelrel\geq{step: op_mut_inf_to_rdf} \limsup_{l \to \infty} \limsup_{n \to \infty} R_{\epsilon_{n}}^{\phi_{s, \epsilon_{n}, l}^{\opt}}(D) \\
            & \nonumber \labelrel\geq{step: opti_rdf} \limsup_{l \to \infty} \limsup_{n \to \infty} R_{\epsilon_{n}}(D) \\
            & = \label{eqn: converse_2nd_scheme} \limsup_{n \to \infty} R_{\epsilon_{n}}(D),
        \end{align}
    where \eqref{step: rdf_geq_optim_mut_inf} follows from the same arguments leading to \cite[Eqn.~(B.7)]{tan2024_2}; \eqref{step: min_mut_inf_ep} follows by plugging the optimal solution of Eqn.~\eqref{eqn: op_pair_mut_inf_ep}; \eqref{step: ep_ep_n_equ} follows from Lemma~\ref{lem: op_mut_inf_converg}; \eqref{step: lim_exists} follows as the limit of the sequence of optimized mutual information terms in Eqn.~\eqref{eqn: converg_mut_inf_ep_ep_n} exists and is finite, thus its limit superior equals to its limit \cite[Thm.~4.1.12]{trench2022}; \eqref{step: op_mut_inf_to_rdf} follows as $\frac{1}{l} I\Big(\XvecEpnPhiLop; \rXvecEpnPhiLop\Big)$ is an achievable code rate for a fixed sufficiently large blocklength $l$, which is not lower than the \gls{rdf} $R_{\epsilon_{n}}^{\phi_{s, \epsilon_{n}, l}^{\opt}}(D)$ by definition \footnote{Segment the continuously generated \gls{dt} \gls{wscs} source symbols into separated blocks of length $l$, then insert an finite and  bounded intervals between consecutive blocks to facilitate the statistical independence of different blocks and synchronize the initial sampling phase of each block to $\phi_{s, \epsilon_{n}, l}^{\opt}$. All blocks are thus \gls{iid} and their reconstructions follow from the optimal conditional distribution $p\bigg(\rXvecEpnPhiLop \Big| \XvecEpnPhiLop\bigg)$ defined in Eqn.~\eqref{eqn: op_pair_mut_inf_ep_n}. As $\frac{1}{l} I\bigg(\XvecEpnPhiLop; \rXvecEpnPhiLop\bigg)$ is an achievable rate for a fixed $l$, if   $\frac{1}{l} I\bigg(\XvecEpnPhiLop; \rXvecEpnPhiLop\bigg) < R_{\epsilon_{n}}^{\phi_{s, \epsilon_{n}, l}^{\opt}}(D)$, then the overall rate is smaller than $R_{\epsilon_{n}}^{\phi_{s, \epsilon_{n}, l}^{\opt}}(D)$, which contradicts the \gls{rdf} definition in Def.~\ref{def: rdf}. Note that as $l$ is sufficiently large, the decrease in  rate and the increase in distortion due to the above construction become asymptotically negligible (see analysis following Eqns.~\eqref{eqn: overall_code_rate} and \eqref{eqn: cw_r_dist_met}, respectively.)}; and \eqref{step: opti_rdf} follows from Eqn.~\eqref{eqn:WSCS_RDF_opt}.
    
    For the achievability part of the lemma, we consider the source sequence generation scheme described in Section~\ref{sec: res}, in which all source sequences are generated with the optimal initial sampling phases. In this scheme, the sequence of source symbols obtained by asynchronously sampling the \gls{ct} \gls{wscs} source process is equally segmented into multiple blocks. Each segmented block of source symbols has a finite blocklength of $l \in \mNplus$, which is referred to as an \emph{$l$-block}. Then, between consecutive $l$-blocks, a guard interval is inserted in order to facilitate  statistical independence among the $l$-blocks and simultaneously synchronize the start time of the subsequent $l$-block to the optimal initial sampling phase within a single period of the \gls{af} of $X_c(t)$.
    The optimal initial sampling phase value for each $l$-block, $\PhiEpLOp$, is obtained from the minimization in Eqn.~\eqref{eqn: op_pair_mut_inf_ep}. In the following, we elaborate on the operations at the encoder and at the decoder, respectively.
    
    \textit{Encoder's Operations:} The encoder maintains a guard time between processing of consecutive $l$-blocks. This guard time is set to sufficiently long to facilitate statistical independence between symbols belonging to different processed $l$-blocks. Given that the maximal correlation length for the \gls{dt} \gls{wsacs} Gaussian process $\XEpPhi[i]$ is $\tau_{c}$ samples, the duration of the guard interval in \gls{ct} should be at least $\tau_{c} \cdot T_{s}(\epsilon)$. An $l$-block appended with $\tau_{c}$ samples 
    is referred to as an \emph{$(l + \tau_{c})$-block}. Then, an interval of duration $\Delta_{g}$ in \gls{ct} 
    is added to facilitate the synchronization of the start time of the subsequent $l$-block to the optimal initial sampling phase. Therefore, an input codeword of $k \cdot l$ source symbols is transmitted over a \gls{dt} interval whose length corresponds to $k \cdot \Big(l + \tau_{c} + \frac{\Delta_{g}}{T_{s}(\epsilon)}\Big)$ samples. Let $\Delta_{g}'$ denote the sampling phase of the last sample of each $(l + \tau_{c})$-block, 
        $\Delta_{g}' \triangleq \big(\PhiEpLOp + (l + \tau_{c}) \cdot T_{s} (\epsilon)\big) \mod T_{c}$.
    Then $\Delta_{g}$ is given by
    \begin{equation*}
        \Delta_{g} =
        \begin{cases}
            \PhiEpLOp - \Delta_{g}', \quad & \Delta_{g}'  \leq \PhiEpLOp, \\
            T_{c} - \Delta_{g}' + \PhiEpLOp, \quad &\Delta_{g}' > \PhiEpLOp.
        \end{cases}
    \end{equation*}
    It is noted that $\Delta_{g}$ is deterministically computable at the encoder, since the \gls{af} of the \gls{ct} \gls{wscs} source process $X_{c}(t)$ is assumed to be known at the encoder, $T_{s}(\epsilon)$ is the sampling interval at the transmitter, and $\PhiEpLOp$ is computable from the minimization in Eqn.~\eqref{eqn: op_pair_mut_inf_ep}. The scheme detailed above transmits an $l$-block at rate $R$ bits per sample over an interval corresponding to $l + \tau_{c} + \frac{\Delta_{g}}{T_{s}(\epsilon)}$ samples. Thus, the overall code rate of this scheme is
    \begin{equation}
    \label{eqn: overall_code_rate}
        R \cdot \frac{l}{l + \tau_{c} + \frac{\Delta_{g}}{T_{s}(\epsilon)}} = R \cdot \Bigg(1 - \frac{\tau_{c} + \frac{\Delta_{g}}{T_{s}(\epsilon)}}{l + \tau_{c} + \frac{\Delta_{g}}{T_{s}(\epsilon)}}\Bigg).
    \end{equation}
    Note that as $l \to \infty$, the decrease of the code rate due to the introduction of the guard interval becomes asymptotically negligible.

    After inserting guard intervals, all $l$-blocks are statistically independent and have the same initial sampling phases, thus all $l$-blocks are \gls{iid} and accordingly a single optimal codebook can used for compression of all blocks. Each $l$-block with the optimal initial sampling phase $\PhiEpLOp$ (which is denoted as $\XvecEpOpPhiL$) is compressed into a message index using the optimal codebook denoted by $\CBsetLOpPhiOp$, which is generated according to the conditional distribution $ p\Big(\rXvecEpPhiLop | \XvecEpPhiLop\Big)$ obtained through the minimization in Eqn.~\eqref{eqn: op_pair_mut_inf_ep}. This message index is sent to the decoder.
    
    \textit{Decoder's Operations:} 
    The proposed source sequence generation scheme with the optimal codebook $\CBsetLOpPhiOp$ represents an input codeword of $k \cdot (l + \tau_{c})$ source symbols, denoted as $\big\{\XEpOpPhi[i]\big\}_{i = 0}^{k \cdot (l + \tau_{c}) - 1} \equiv \XvecEpOpPhiGIk$, by $k \cdot (l + \tau_{c})$ reconstruction samples, denoted as $\big\{\rOpXEpOpPhi[i]\big\}_{i = 0}^{k \cdot (l + \tau_{c}) - 1} \equiv \rXOpvecEpOpPhiGIk$. The vector $\rXOpvecEpOpPhiGIk$ consists of $k$ reconstructed $(l + \tau_{c})$-blocks and each containing an optimal reconstructed $l$-block and $\tau_{c}$ zero samples. Let $\big\{\XEpOpPhi[i]\big\}_{i = 0}^{k \cdot l - 1} \equiv \XvecEpOpPhik$ denote the set of $k$ segmented $l$-blocks at the encoder and $\big\{\rOpXEpOpPhi[i]\big\}_{i = 0}^{k \cdot l - 1} \equiv \rOpXvecEpOpPhik$ denote the set of $k$ optimal reconstructed $l$-blocks (i.e., after discarding $k \cdot \tau_{c}$ zero samples) at the decoder. Define $\big\{\OpSEpOpPhi[i]\big\}_{i = 0}^{k \cdot (l + \tau_{c}) - 1} \equiv \OpSvecEpOpPhiGIk \triangleq \XvecEpOpPhiGIk - \rXOpvecEpOpPhiGIk$ and $\big\{\OpSEpOpPhi\big\}_{i = 0}^{k \cdot l - 1} \equiv \OpSvecEpOpPhik \triangleq \XvecEpOpPhik - \rOpXvecEpOpPhik$. 
    Note that as the optimal codebook used is generated via \eqref{eqn: op_pair_mut_inf_ep},
    then $\OpSvecEpOpPhiGIk$ belongs of the set $\CsetPhiSvecL$ defined in Eqn.~\eqref{eqn: fea_set_phi_dis}.
    With these definitions, the distortion between $\XvecEpOpPhiGIk$ and $\rXOpvecEpOpPhiGIk$ is upper bounded as follows:
        \begin{align}
            & \nonumber \mE\Bigg\{\frac{1}{k \cdot (l + \tau_{c})} \sum_{i = 0}^{k \cdot (l + \tau_{c}) - 1} \Big(X_{\epsilon,\phisopt}[i] - \rOpXEpOpPhi[i]\Big)^{2}\Bigg\} \\
            & \nonumber \leq \mE\Bigg\{\frac{1}{k \cdot l} \sum_{i = 0}^{k \cdot (l+\tau_c) - 1} \Big(\OpSEpOpPhi[i]\Big)^{2}\Bigg\} \\
            & \nonumber \labelrel={step: zero_samples} \frac{1}{l} \sum_{l' = 0}^{l - 1} \Bigg(\frac{1}{k} \sum_{k' = 0}^{k - 1} \mE\bigg\{\Big(S_{\epsilon, \phisopt}^{\opt}[k' \cdot (l + \tau_{c})+ l']\Big)^{2}\bigg\}\Bigg) \\
            & \quad + \frac{1}{l} \sum_{l' = 0}^{\tau_{c} - 1} \Bigg(\frac{1}{k} \sum_{k' = 0}^{k - 1} \mE\bigg\{\Big(X_{\epsilon,\phisopt}[k' \cdot (l + \tau_{c}) + l + l']\Big)^{2}\bigg\}\Bigg) \\
            & \labelrel={step: iid_l_block} \frac{1}{l} \sum_{l' = 0}^{l - 1} \mE\bigg\{\Big(S_{\epsilon, \phisopt}^{\opt}[l']\Big)^{2}\bigg\} + \frac{1}{l} \sum_{l' = 0}^{\tau_{c} - 1} \Bigg(\mE\bigg\{\Big(X_{\epsilon,\phisopt}[l + l']\Big)^{2}\bigg\}\Bigg) \\
            & \label{eqn: cw_r_dist_met} \labelrel\leq{step: dist_met} D + \frac{\tau_{c} \cdot \gamma}{l},
        \end{align}
    where \eqref{step: zero_samples} follows as the $k \cdot \tau_{c}$ source symbols used in guard intervals at the encoder are reconstructed as $k \cdot \tau_{c}$ zero samples at the decoder; \eqref{step: iid_l_block} follows as all $l$-blocks are \gls{iid} and they are reconstructed using the same codebook; and \eqref{step: dist_met} follows from the trace constraint condition in the definition of the set $\CsetPhiSvecL$ in Eqn.~\eqref{eqn: fea_set_phi_dis} and as the \gls{af} of the \gls{ct} \gls{wscs} Gaussian source process $X_{c}(t)$ is bounded by $\gamma$, see Sec.~\ref{sec: prob_form_mod}. This analysis implies that the compression of $\XvecEpOpPhiGIk$ asymptotically satisfies the given distortion constraint $D$ as $l \to \infty$.
    
    In the following, denote the $m$-th $l$-block with the optimal initial sampling phase by $\XvecEpOpPhim$, denote its optimal reconstruction by $\rXvecEpOpPhimOp$, and let $\FXrXOpmOpPhi$ denote their joint \gls{cdf}, $0 \leq m \leq k - 1$. The mutual information density rate between the $m$-th $l$-block and its optimal reconstruction is defined as
    \begin{equation}
    \label{eqn: midr_lbk_opr}
        Z\Bigg(\FXrXOpmOpPhi\Bigg) \triangleq \frac{1}{l} \log \left(\frac{p_{\XvecEpOpPhim \big| \rXvecEpOpPhimOp}\Big(\XvecEpOpPhim \big| \rXvecEpOpPhimOp\Big)}{p_{\XvecEpOpPhim}\Big(\XvecEpOpPhim\Big)}\right).
    \end{equation}
    Next, let the mutual information density rate between $\XvecEpOpPhik$ and $\rOpXvecEpOpPhik$ be denoted as $\ZFXrXOpklOpPhi$:
    \begin{equation}
        \ZFXrXOpklOpPhi \\
        \label{eqn: midr_cw_r_opisp} \triangleq \frac{1}{k \cdot l} \log \left(\frac{p_{\XvecEpOpPhik \Big| \rOpXvecEpOpPhik}\bigg(\XvecEpOpPhik \Big| \rOpXvecEpOpPhik\bigg)}{p_{\XvecEpOpPhik}\bigg(\XvecEpOpPhik\bigg)}\right).
    \end{equation}
    By Eqn.~\eqref{eqn: cw_r_dist_met},  the distortion associated with compressing $\XvecEpOpPhiGIk$ is upper bounded by $D + \frac{\tau_{c} \cdot \gamma}{l}$, which asymptotically approaches $D$ as $l \to \infty$. We can therefore upper bound $R_{\epsilon}(D)$ as follows:
        \begin{equation}
            R_{\epsilon}(D) \labelrel\leq{step: rdf_leq_limsupp} \limsupp_{l \to \infty} \ZFXrXOpklOpPhi \label{eqn: 2nd_gen_sche_achi_fin_exp} \labelrel\leq{step: limsupp_leq_rdf_ep_n} \limsup_{n \to \infty} R_{\epsilon_{n}}(D),
        \end{equation}
    where \eqref{step: rdf_leq_limsupp} follows from the definition of rate-distortion pairs in Def.~\ref{def: achi_rate_dist_pair} and Eqn.~\eqref{eqn: rdf_arb_dt_sou_proc}, which imply that the rate-distortion pair $\Bigg(\limsupp_{l \to \infty} \ZFXrXOpklOpPhi, D\Bigg)$ is achievable and 
    $\limsupp_{l \to \infty} \ZFXrXOpklOpPhi$ 
    cannot be smaller than the \gls{rdf} $R_{\epsilon}(D)$. Next, we show the inequality for step~\eqref{step: limsupp_leq_rdf_ep_n}.
    
    Following Eqn.~\eqref{eqn: midr_cw_r_opisp}, we obtain
        \begin{align*}
            & \ZFXrXOpklOpPhi \\
            & \triangleq \frac{1}{k \cdot l} 
            \left(\frac{p_{\XvecEpOpPhik \Big| \rOpXvecEpOpPhik}\bigg(\XvecEpOpPhik \Big| \rOpXvecEpOpPhik\bigg)}{p_{\XvecEpOpPhik}\bigg(\XvecEpOpPhik\bigg)}\right) \\
            & \labelrel={step: mlbk_mut_indp} \frac{1}{k \cdot l} \log \left(\prod_{m = 0}^{k - 1} \frac{p_{\XEpOpPhim \big| \rXEpOpPhimOp}\Big(\XvecEpOpPhim | \rXvecEpOpPhimOp\Big)}{p_{\XEpOpPhim}\Big(\XvecEpOpPhim\Big)}\right) \\
            & = \frac{1}{k} \sum_{m = 0}^{k - 1} \frac{1}{l} \log \left(\frac{p_{\XEpOpPhim \big| \rXEpOpPhimOp}\Big(\XvecEpOpPhim \big| \rXvecEpOpPhimOp\Big)}{p_{\XEpOpPhim}\Big(\XvecEpOpPhim\Big)}\right) \\
            & \labelrel={step: midr_lbk_opr} \frac{1}{k} \sum_{m = 0}^{k - 1} Z\bigg(\FXrXOpmOpPhi\bigg),
        \end{align*}
    where \eqref{step: mlbk_mut_indp} follows from the statistical independence between different $l$-blocks and \eqref{step: midr_lbk_opr} follows from the definition in Eqn.~\eqref{eqn: midr_lbk_opr}. Taking the expectation and the variance of $\ZFXrXOpklOpPhi$, we have
        \begin{align}
            & \mE\left\{\ZFXrXOpklOpPhi\right\} \\
            & = \frac{1}{k} \sum_{m = 0}^{k - 1} \mE\Bigg\{Z\bigg(\FXrXOpmOpPhi\bigg)\Bigg\} \label{eqn: mean_density_rates} \labelrel={step: equ_nor_mut_inf} \frac{1}{l} I\Big(\XvecEpOpPhiL; \rXvecEpOpPhiLOp\Big), \\
            & \mVar\left\{\ZFXrXOpklOpPhi\right\} \\
            & \labelrel={step: midr_mut_indp} \frac{1}{k^{2}} \sum_{m = 0}^{k - 1} \mVar\Bigg\{Z\bigg(\FXrXOpmOpPhi\bigg)\Bigg\} \label{eqn: up_bou_density_rates} \labelrel<{step: prev_var_midr} \frac{3}{k \cdot l},
        \end{align}
    where \eqref{step: equ_nor_mut_inf} follows from the notion of \cite[Eqns.~(B.5a) and (B.15)]{tan2024_2}; \eqref{step: midr_mut_indp} follows from the statistical independence between different $l$-blocks, which induces the statistical independence between mutual information density rates; and \eqref{step: prev_var_midr} follows from the similar derivation leading to the upper bound in \cite[Eqn.~(B.17)]{tan2024_2}.
    
    Next, plugging the expectation in Eqn.~\eqref{eqn: mean_density_rates} and the upper bound of the variance in Eqn.~\eqref{eqn: up_bou_density_rates} into Chebyshev inequality \cite[Eqn.~(1.58)]{gallager2013}, we obtain
    \begin{equation}
        \Pr\Bigg\{\Bigg|\ZFXrXOpklOpPhi - \frac{1}{l} I\Big(\XvecEpOpPhiL; \rXvecEpOpPhiLOp\Big)\Bigg| \geq \frac{1}{(k \cdot l)^{\frac{1}{3}}}\Bigg\} < \frac{3}{(k \cdot l)^{\frac{1}{3}}},
    \end{equation}
    where the upper bound of the probability decreases as $k \cdot l$ increases. Therefore, we conclude that for any $l \in \mNplus$ and $\delta \in \mRdplus$, there exists an associated $k_{l, \delta}$, s.t. for any $k \geq k_{l, \delta}$, we have
    \begin{equation}
    \label{eqn: pr_mut_inf_den_rate}
        \Pr\left\{\ZFXrXOpklOpPhi \geq \frac{1}{l} I\Big(\XvecEpOpPhiL; \rXvecEpOpPhiLOp\Big) + \delta \right\} < 3 \delta.
    \end{equation}

    Recalling the definition of the limit superior in probability in Def.~\ref{def: limsup_in_prob}, we have
    \begin{equation}
        \limsupp_{l \to \infty} \ZFXrXOpklOpPhi = \inf\left\{\alpha \in \mR \bigg| \lim_{l \to \infty}\Pr\Bigg\{\ZFXrXOpklOpPhi > \alpha\Bigg\} = 0 \right\}.
    \end{equation}
    Therefore, step~\eqref{step: limsupp_leq_rdf_ep_n} in Eqn.~\eqref{eqn: 2nd_gen_sche_achi_fin_exp} is proved if  for any $\delta \in \mRdplus$, the probability
    \begin{equation}
    \label{eqn: aim_achi_proof}
        \Pr\Bigg\{\ZFXrXOpklOpPhi > \limsup_{n \to \infty} R_{\epsilon_{n}}(D) + 5\delta \Bigg\},
    \end{equation}
    can be made arbitrarily small by the proper selection of the optimal initial sampling phase $\phisopt$ and the joint \gls{cdf} $F_{\XvecEpOpPhik, \rOpXvecEpOpPhik}$,  taking $l \in \mNplus$ and $k \in \mNplus$ sufficiently large. To that aim, define a constant $\tldgamma$ as
    \begin{equation}
    \label{eqn: def_tilde_gamma}
        \tldgamma \triangleq \frac{1}{2} \cdot \bigg(\log(\gamma) + \frac{\log(e)}{\gamma_{c}}\bigg),
    \end{equation}
    and select $l$ large enough s.t.
        $\frac{\tau_{c} \cdot \tldgamma}{l + \tau_{c}} < \delta$.
    Next, further increase  $l$ and select $n \in \mNplus$ sufficiently large s.t.
    \begin{equation}
    \label{eqn: delta_min_mut_inf_converg}
        \bigg|\frac{1}{l + \tau_{c}}I\Big(\XvecEpnPhiLop; \rXvecEpnPhiLop\Big) - \frac{1}{l} I\Big(\XvecEpOpPhiL; \rXvecEpOpPhiLOp\Big)\bigg| < \delta,
    \end{equation}
    where the pairs $\bigg(\phi_{s, \epsilon_{n}, l}^{\opt}, p\Big(\rXvecEpnPhiLop | \XvecEpnPhiLop\Big)\bigg)$ and $\bigg(\phi_{s, \epsilon, l}^{\opt}, p\Big(\rXvecEpPhiLop | \XvecEpPhiLop\Big)\bigg)$ are obtained from the minimization in Eqns.~\eqref{eqn: op_pair_mut_inf_ep_n} and \eqref{eqn: op_pair_mut_inf_ep}, respectively. Note that due to the convergence in Lemma~\ref{lem: op_mut_inf_converg} and $\lim_{l \to \infty} \frac{l}{l + \tau_{c}} = 1$, it is possible to find a pair $l$, $n$ s.t. Eqn.~\eqref{eqn: delta_min_mut_inf_converg} is satisfied. Next, we further increase $n$ to guarantee
    \begin{equation}
        \label{eqn: n_long_enough}
        R_{\epsilon_{n}}(D) < \limsup_{n_{0} \to \infty} R_{\epsilon_{n_{0}}}(D) + \delta,
    \end{equation}
    which is possible by the definition of the limit superior. Fixing $n$, We then pick $k \in \mNplus$ large enough s.t.
    \begin{equation}
    \label{eqn: asymp_achi_rdp}
        R_{\epsilon_{n}}(D) \geq \frac{1}{k \cdot (l + \tau_{c})} I\Big(\XvecEpnPhiGIkOp; \rXvecEpnPhiGIkOp\Big) - \delta,
    \end{equation}
    $\forall \PhiEpn \in [0, T_{c})$, where the reconstruction process $\mr{X}_{\epsilon_{n}}^{\PhiEpn, \opt}[i]$, over each $l$ symbols of an $(l+\tau_c)$-block follows the optimal distribution given the \gls{dt} \gls{wscs} process $X_{\epsilon_{n}}^{\PhiEpn}[i]$, in the sense that it achieves the \gls{rdf} in \cite[Thm.~1]{kipnis2018}. We note that such selection of $k$ is possible by 
    the definition of asymptotically achievable rate-distortion pairs, see \cite[Def.~8.10]{yeung2008}.

    Denote the $\tau_{c}$ symbols appended after the $m$-th $l$-block for the synchronous sampling scenario by $\tauSymEpnPhim$ and denote its reconstruction by $\rtauSymEpnPhim \equiv \zerovec^{\tau_{c}}$, $\PhiEpn \in [0, T_{c})$, $0 \leq m \leq k - 1$. Thus, the $k \cdot \tau_{c}$ symbols representing the guard intervals are reconstructed as zero samples at the decoder. We then define four random vectors $\XvecOne$, $\XvecTwo$, $\rXvecOne$ and $\rXvecTwo$ for synchronous sampling cases as follows:
        \begin{align*}
            \XvecOne & \equiv \Big\{\XvecEpnPhim\Big\}_{m = 0}^{k - 1}, \qquad
            \XvecTwo \equiv \Big\{\tauSymEpnPhim\Big\}_{m = 0}^{k - 1}, \\
            \rXvecOne & \equiv \Big\{\rXvecEpnPhimOp\Big\}_{m = 0}^{k - 1}, \qquad 
            \rXvecTwo \equiv \Big\{\rtauSymEpnPhim\Big\}_{m = 0}^{k - 1} \equiv \zerovec^{k \cdot \tau_{c}}.
        \end{align*}
    It is noted that $\big[(\XvecTwo)^{T} (\XvecOne)^{T}\big]^{T}$ is the permutation of the vector $\XvecEpnPhiGIk$. Letting  $\permmat$ denote the permutation matrix, we write $\big[(\XvecTwo)^{T} (\XvecOne)^{T}\big]^{T} = \permmat \cdot \XvecEpnPhiGIk$. Accordingly, we define $\big[(\SvecTwo)^{T} (\SvecOne)^{T}\big]^{T} \triangleq \big[(\XvecTwo)^{T} (\XvecOne)^{T}\big]^{T} - \big[(\rXvecTwo)^{T} (\rXvecOne)^{T}\big]^{T} = \permmat \cdot \Big(\XvecEpnPhiGIk - \rXvecEpnPhiGIkOp\Big) \triangleq \permmat \cdot \OpSvecEpPhiGIk$. 
    
    With these definitions, we obtain equalities presented as follows:
        \begin{align}
            & I\Big(\XvecEpnPhiGIkOp; \rXvecEpnPhiGIkOp\Big) \nonumber \\
            & \equiv I(\XvecOne, \XvecTwo; \rXvecOne, \rXvecTwo) \nonumber \\
            & = I(\rXvecOne; \XvecOne, \XvecTwo) + I(\rXvecTwo; \XvecOne, \XvecTwo | \rXvecOne) \nonumber \\
            & = I(\XvecOne; \rXvecOne) + I(\rXvecOne; \XvecTwo | \XvecOne) + I(\rXvecTwo; \XvecOne, \XvecTwo | \rXvecOne) \nonumber \\
            & = I(\XvecOne; \rXvecOne) + h(\XvecTwo | \XvecOne) - h(\XvecTwo | \XvecOne, \rXvecOne) + h(\XvecOne, \XvecTwo | \rXvecOne) \nonumber \\
            & \quad - h(\XvecOne, \XvecTwo | \rXvecOne, \rXvecTwo) \nonumber \\
            & = I(\XvecOne; \rXvecOne) + h(\XvecTwo | \XvecOne) - h(\XvecTwo | \XvecOne, \rXvecOne) + h(\XvecOne | \rXvecOne ) \nonumber \\
            & \quad + h(\XvecTwo |\XvecOne, \rXvecOne ) - h(\XvecOne, \XvecTwo | \rXvecOne, \rXvecTwo) \nonumber \\
            & = I(\XvecOne; \rXvecOne) + h(\XvecTwo | \XvecOne) + h(\XvecOne | \rXvecOne) - h(\XvecOne, \XvecTwo | \rXvecOne, \rXvecTwo) \nonumber \\
            & \labelrel={step: dist_diff} I(\XvecOne; \rXvecOne) + h(\XvecTwo | \XvecOne) + h(\SvecOne | \rXvecOne) - h(\SvecOne, \SvecTwo | \rXvecOne, \rXvecTwo) \nonumber \\
            & \labelrel={step: dist_ind} I(\XvecOne; \rXvecOne) + h(\XvecTwo | \XvecOne) + h(\SvecOne) - h(\SvecOne, \SvecTwo) \nonumber \\
            & \label{eqn: mut_inf_with_guard_intvls} = I(\XvecOne; \rXvecOne) - \big(h(\SvecTwo | \SvecOne) - h(\XvecTwo | \XvecOne)\big),
        \end{align}
    where \eqref{step: dist_diff} follows as $\big[(\SvecTwo)^{T} (\SvecOne)^{T}\big]^{T} \triangleq \big[(\XvecTwo)^{T} (\XvecOne)^{T}\big]^{T} - \big[(\rXvecTwo)^{T} (\rXvecOne)^{T}\big]^{T}$; and \eqref{step: dist_ind} follows from the statistical independence between $\big[(\SvecTwo)^{T} (\SvecOne)^{T}\big]^{T}$ and $\big[(\rXvecTwo)^{T} (\rXvecOne)^{T}\big]^{T}$ (see \cite[Sec.~10.3.2]{cover2006}). 
    
    Define $\macrmat{\XvecTwo \cdot \XvecOne} \triangleq \mE\{\XvecTwo \cdot (\XvecOne)^{T}\}$, $\macrmat{\XvecOne \cdot \XvecTwo} \triangleq \mE\{\XvecOne \cdot (\XvecTwo)^{T}\}$ and $\macrmat{\XvecTwo, \XvecOne} \triangleq \mE\big\{[(\XvecTwo)^{T} (\XvecOne)^{T}]^{T} \cdot [(\XvecTwo)^{T} (\XvecOne)^{T}] \big\}$. Denote the maximal diagonal element of the real square matrix $\mmat{A}$ by $\maxDiag\{\mmat{A}\}$. 
    We can upper bound $h(\SvecTwo | \SvecOne) - h(\XvecTwo | \XvecOne)$ as follows:
    \begin{align*}
        & h(\SvecTwo | \SvecOne) - h(\XvecTwo | \XvecOne) \\
        & \labelrel\leq{step: cond_reduc_diff_entrop} h(\XvecTwo) - h(\XvecTwo | \XvecOne) \\
        & \labelrel={step: gaussian_dist} \frac{1}{2} \cdot \bigg( \log\det\big(2\pi e \cdot \macrmat{\XvecTwo}\big) - \log\det\Big(2\pi e \cdot \big(\macrmat{\XvecTwo} - \macrmat{\XvecTwo \cdot \XvecOne} \cdot (\macrmat{\XvecOne})^{-1} \cdot \macrmat{\XvecOne \cdot \XvecTwo}\big) \Big) \bigg) \\
        & \labelrel\le{step: hadamard_inequ} \frac{1}{2} \cdot \Big( k \cdot \tau_{c} \cdot \log(\gamma) - \log\det\big(\macrmat{\XvecTwo} - \macrmat{\XvecTwo \cdot \XvecOne} \cdot (\macrmat{\XvecOne})^{-1} \cdot \macrmat{\XvecOne \cdot \XvecTwo}\big) \Big) \\
        & \labelrel\leq{step: log_det_inequ} \frac{1}{2} \cdot \Bigg( k \cdot \tau_{c} \cdot \log(\gamma) + \log(e) \cdot \bigg(\tr\Big\{\big(\macrmat{\XvecTwo} - \macrmat{\XvecTwo \cdot \XvecOne} \cdot (\macrmat{\XvecOne})^{-1} \cdot \macrmat{\XvecOne \cdot \XvecTwo}\big)^{-1}\Big\} - k \cdot \tau_{c}\bigg) \Bigg) \\
        & \leq \frac{k \cdot \tau_{c}}{2} \cdot \bigg(\log(\gamma) + \log(e) \cdot \maxDiag\Big\{\big(\macrmat{\XvecTwo} - \macrmat{\XvecTwo \cdot \XvecOne} \cdot (\macrmat{\XvecOne})^{-1} \cdot \macrmat{\XvecOne \cdot \XvecTwo}\big)^{-1}\Big\}\bigg) \\
        & \labelrel\leq{step: maxdiag_thm} \frac{k \cdot \tau_{c}}{2} \cdot \Big(\log(\gamma) + \log(e) \cdot \maxDiag\big\{(\macrmat{\XvecTwo, \XvecOne})^{-1}\big\}\Big) \\
        & \labelrel\leq{step: permmat_exp} \frac{k \cdot \tau_{c}}{2} \cdot \left(\log(\gamma) + \log(e) \cdot \Bigg\Vert \bigg(\permmat \cdot \macrmat{\XvecEpnPhiGIk} \cdot \permmat^{T}\bigg)^{-1} \Bigg\Vert_{1} \right) \\
        & \labelrel\leq{step: permmat_ortho_norm_prop} \frac{k \cdot \tau_{c}}{2} \cdot \left(\log(\gamma) + \log(e) \cdot \Bigg\Vert \bigg(\macrmat{\XvecEpnPhiGIk}\bigg)^{-1} \Bigg\Vert_{1} \right) \\
        & \labelrel\leq{step: by_sdd} \frac{k \cdot \tau_{c}}{2} \cdot \bigg(\log(\gamma) + \frac{\log(e)}{\gamma_{c}}\bigg),
    \end{align*}
    where \eqref{step: cond_reduc_diff_entrop} follows as $\SvecTwo \triangleq \XvecTwo - \rXvecTwo = \XvecTwo - \zerovec^{k \cdot \tau_{c}} = \XvecTwo$; \eqref{step: gaussian_dist} follows from the Gaussianity of the vector $\XvecTwo$ and as the conditional distribution of $\XvecTwo$ given $\XvecOne$ is Gaussian with the autocovariance matrix $\big(\macrmat{\XvecTwo} - \macrmat{\XvecTwo \cdot \XvecOne} \cdot (\macrmat{\XvecOne})^{-1} \cdot \macrmat{\XvecOne \cdot \XvecTwo}\big)$ (see \cite[Sec.~21.6]{fristedt1997}); \eqref{step: hadamard_inequ} follows from the Gaussianity of the vector $\XvecTwo$ and Hadamard's inequality \cite[Eqn.~(8.64)]{cover2006}, and as the \gls{af} of the \gls{ct} source process $X_{c}(t)$ is bounded by $\gamma$ (see Section~\ref{sec: prob_form_mod}); \eqref{step: log_det_inequ} follows from the symmetric positive definiteness of $\big(\macrmat{\XvecTwo} - \macrmat{\XvecTwo \cdot \XvecOne} \cdot (\macrmat{\XvecOne})^{-1} \cdot \macrmat{\XvecOne \cdot \XvecTwo}\big)^{-1}$ and \cite[Lemma~11.6]{golub2010}, and from the fact $\log(x) = \ln(x) \cdot \log(e) \leq (x - 1) \cdot \log(e)$ for $x \in \mRdplus$\footnote{As the matrix $\big(\macrmat{\XvecTwo} - \macrmat{\XvecTwo \cdot \XvecOne} \cdot (\macrmat{\XvecOne})^{-1} \cdot \macrmat{\XvecOne \cdot \XvecTwo}\big)^{-1}$ is symmetric positive definite, it is diagonalizable and all its eigenvalues are positive. Thus, this fact is applicable.}; \eqref{step: maxdiag_thm} follows from \cite[Eqn.~(0.7.3.1)]{horn2012}, which implies that $\big(\macrmat{\XvecTwo} - \macrmat{\XvecTwo \cdot \XvecOne} \cdot (\macrmat{\XvecOne})^{-1} \cdot \macrmat{\XvecOne \cdot \XvecTwo}\big)^{-1}$ is the upper-left block of $(\macrmat{\XvecTwo, \XvecOne})^{-1}$; \eqref{step: permmat_exp} follows from the definition of the $1$-norm of a square matrix, which is not smaller than its maximal diagonal element, and as $\macrmat{\XvecTwo, \XvecOne} \triangleq \mE\big\{[(\XvecTwo)^{T} (\XvecOne)^{T}]^{T} \cdot [(\XvecTwo)^{T} (\XvecOne)^{T}] \big\} = \permmat \cdot \mE\bigg\{\XvecEpnPhiGIk \cdot \Big(\XvecEpnPhiGIk\Big)^{T}\bigg\} \cdot \permmat^{T} = \permmat \cdot \macrmat{\XvecEpnPhiGIk} \cdot \permmat^{T}$; \eqref{step: permmat_ortho_norm_prop} follows from the orthogonality of permutation matrices \cite[Sec.~0.9.5]{horn2012} and the submultiplicativity of the $1$-norm of matrices (see \cite[Pg.~341 and Example~5.6.4]{horn2012}), which result in $\bigg(\permmat \cdot \macrmat{\XvecEpnPhiGIk} \cdot \permmat^{T}\bigg)^{-1} = (\permmat^{T})^{-1} \cdot \bigg(\macrmat{\XvecEpnPhiGIk}\bigg)^{-1} \cdot \permmat^{-1} = \permmat \cdot \bigg(\macrmat{\XvecEpnPhiGIk}\bigg)^{-1} \cdot \permmat^{T}$ and $\bigg\Vert \permmat \cdot \bigg(\macrmat{\XvecEpnPhiGIk}\bigg)^{-1} \cdot \permmat^{T} \bigg\Vert_{1} \leq \Vert \permmat \Vert_{1} \cdot \bigg\Vert \bigg(\macrmat{\XvecEpnPhiGIk}\bigg)^{-1} \bigg\Vert_{1} \cdot \Vert \permmat^{T} \Vert_{1} = \bigg\Vert \bigg(\macrmat{\XvecEpnPhiGIk}\bigg)^{-1} \bigg\Vert_{1}$\footnote{As $\permmat$ has only one element of $1$ in each row and in each column with all other elements 0, $\Vert \permmat \Vert_{1} = \Vert \permmat^{T} \Vert_{1} = 1$.}, respectively; and \eqref{step: by_sdd} follows from the similar derivation from Eqns.~\eqref{eqn: inv_1_norm_inv_autocorr_mat} to \eqref{eqn: low_boud_min_eig}. Recalling the definition of $\tldgamma$ in Eqn.~\eqref{eqn: def_tilde_gamma}, we obtain the upper bound of $h(\SvecTwo | \SvecOne) - h(\XvecTwo | \XvecOne)$ as
    \begin{equation}
        \label{eqn: diff_entrop_2_1_up_bound}
        h(\SvecTwo | \SvecOne) - h(\XvecTwo | \XvecOne) \leq k \cdot \tau_{c} \cdot \tldgamma.
    \end{equation}
    Plugging the upper bound in Eqn.~\eqref{eqn: diff_entrop_2_1_up_bound} back into Eqn.~\eqref{eqn: mut_inf_with_guard_intvls}, we obtain
        \begin{align}
            & I\Big(\XvecEpnPhiGIkOp; \rXvecEpnPhiGIkOp\Big) \\
            & \geq I(\XvecOne; \rXvecOne) - k \cdot \tau_{c} \cdot \tldgamma \nonumber \\
            & \geq I\bigg(\Big\{\XvecEpnPhim\Big\}_{m = 0}^{k - 1}; \Big\{\rXvecEpnPhimOp\Big\}_{m = 0}^{k - 1}\bigg) \\
            & \label{eqn: low_boud_mut_inf_with_gi} \quad\quad\quad\quad\quad\quad\quad\quad\quad\quad\quad\quad - k \cdot \tau_{c} \cdot \tldgamma.
        \end{align}
    Then, given the average distortion $D$, the bound on the \gls{rdf} $R_{\epsilon_{n}}(D)$ in  Eqn.~\eqref{eqn: asymp_achi_rdp} can be relaxed as follows:
        \begin{align}
            R_{\epsilon_{n}}(D) & \geq \frac{1}{k \cdot (l + \tau_{c})} \cdot I\Big(\XvecEpnPhiGIkOp; \rXvecEpnPhiGIkOp\Big) - \delta \nonumber \\
            & \nonumber \labelrel\geq{step: intro_low_bound_mut_inf} \frac{1}{k \cdot (l + \tau_{c})} \cdot I\bigg(\Big\{\XvecEpnPhim\Big\}_{m = 0}^{k - 1}; \Big\{\rXvecEpnPhimOp\Big\}_{m = 0}^{k - 1}\bigg) - \frac{k \cdot \tau_{c} \cdot \tldgamma}{k \cdot (l + \tau_{c})} - \delta \\
            & \nonumber \labelrel={step: val_delta} \frac{1}{k \cdot (l + \tau_{c})} \cdot \sum_{m = 0}^{k - 1} I\Big(\XvecEpnPhim; \rXvecEpnPhimOp\Big) - 2\delta \\
            & \label{eqn: rdf_epn_low_bound} \labelrel\geq{step: opt_init_samp_phase} \frac{1}{l + \tau_{c}} \cdot I\Big(\XvecEpnPhiLop; \rXvecEpnPhiLop\Big) - 2\delta,
        \end{align}
    where \eqref{step: intro_low_bound_mut_inf} follows by plugging Eqn.~\eqref{eqn: low_boud_mut_inf_with_gi}; \eqref{step: val_delta} follows from the statistical independence between different $l$-blocks and as $l$ is selected  s.t. $\frac{\tau_{c} \cdot \tldgamma}{l + \tau_{c}} < \delta$; and \eqref{step: opt_init_samp_phase} follows from the minimization of the mutual information w.r.t. the initial sampling phase of each $l$-block over $[0, T_{c})$ (recall the minimization in Eqn.~\eqref{eqn: op_pair_mut_inf_ep_n}) and as all $l$-blocks are \gls{iid}.
    
    Lastly, recalling the probability in Eqn.~\eqref{eqn: aim_achi_proof}, for any $\delta \in \mRdplus$, we can properly select $l$, $n$ and $k$ s.t.
        \begin{align*}
            & \Pr\Bigg\{\ZFXrXOpklOpPhi > \limsup_{n \to \infty} R_{\epsilon_{n}}(D) + 5\delta\Bigg\} \\
            & \labelrel\leq{step: n_suff_large} \Pr\Bigg\{\ZFXrXOpklOpPhi > R_{\epsilon_{n}}(D) + 4\delta\Bigg\} \\
            & \labelrel\leq{step: R_epn_low_boud} \Pr\Bigg\{\ZFXrXOpklOpPhi > \frac{1}{l + \tau_{c}} \cdot I\Big(\XvecEpnPhiLop; \rXvecEpnPhiLop\Big) + 2\delta\Bigg\} \\
            & \labelrel\leq{step: mut_inf_ep_n_converg} \Pr\Bigg\{\ZFXrXOpklOpPhi > \frac{1}{l} \cdot I\Big(\XvecEpOpPhiL; \rXvecEpOpPhiLOp\Big) + \delta\Bigg\} \\
            & \labelrel\leq{step: pr_mut_inf_den_rate_exp} 3\delta,
        \end{align*}
    where \eqref{step: n_suff_large} follows as $n$ is selected to be sufficiently large s.t. the condition in Eqn.~\eqref{eqn: n_long_enough} is satisfied; \eqref{step: R_epn_low_boud} follows from Eqn.~\eqref{eqn: rdf_epn_low_bound}; \eqref{step: mut_inf_ep_n_converg} follows from Eqn.~\eqref{eqn: delta_min_mut_inf_converg}; and lastly \eqref{step: pr_mut_inf_den_rate_exp} follows from Eqn.~\eqref{eqn: pr_mut_inf_den_rate}. Taking $l \to \infty$ and $\delta \to 0$, conclude
    \begin{equation*}
        \lim_{l \to \infty} \Pr\Bigg\{\ZFXrXOpklOpPhi > \limsup_{n \to \infty} R_{\epsilon_{n}}(D) \Bigg\} = 0.
    \end{equation*}
    Recalling the scaling factor $\Bigg(1 - \frac{\tau_{c} + \frac{\Delta_{g}}{T_{s}(\epsilon)}}{l + \tau_{c} + \frac{\Delta_{g}}{T_{s}(\epsilon)}}\Bigg)$ in Eqn.~\eqref{eqn: overall_code_rate}, 
    an actual code rate of $\limsup_{n \to \infty} R_{\epsilon_{n}}(D) \cdot \Bigg(1 - \frac{\tau_{c} + \frac{\Delta_{g}}{T_{s}(\epsilon)}}{l + \tau_{c} + \frac{\Delta_{g}}{T_{s}(\epsilon)}}\Bigg)$ is achievable.  
    Taking $l$  sufficiently large, the asymptotically achievable code rate of $\limsup_{n \to \infty} R_{\epsilon_{n}}(D)$ is obtained. It is finally noted that in this proof we only consider the input codewords whose blocklengths are integer multiples of $l$. As $l$ is fixed, by taking $k$ sufficiently large and padding at most $l - 1$ sampled source symbols, a codebook with an arbitrary blocklength can be obtained with an asymptotically negligible decrease of code rate. This completes the proof of step~\eqref{step: limsupp_leq_rdf_ep_n} in Eqn.~\eqref{eqn: 2nd_gen_sche_achi_fin_exp}.

    Combining the lower bound in Eqn.~\eqref{eqn: converse_2nd_scheme} and the upper bound in Eqn.~\eqref{eqn: 2nd_gen_sche_achi_fin_exp}, it is concluded that $R_{\epsilon}(D) = \limsup_{n \to \infty} R_{\epsilon_{n}}(D)$. This completes the proof of Lemma~\ref{lem: final_lem_2nd_scheme} and therefore completes the proof of Thm.~\ref{thm: main_thm}.
\end{proof}

\end{appendices}

\bibliographystyle{bst._files/self_adjusted_IEEEtran.bst}   
\bibliography{IEEEabrv, references.bib}

\end{document}